\newif\ifextended\extendedtrue
\newif\ifaec\aecfalse
\newif\ifreview\reviewfalse
\title{Understanding Haskell-style Overloading via Open Data and Open Functions}
\author{Andrew Marmaduke}
\affiliation{
  \department{Department of Computer Science}
  \institution{The University of Iowa}
  \streetaddress{14 MacLean Hall}
  \city{Iowa City}
  \state{Iowa}
  \country{USA}}
\email{andrew-marmaduke@uiowa.edu}
\author{Apoorv Ingle}
\affiliation{
  \department{Department of Computer Science}
  \institution{The University of Iowa}
  \streetaddress{14 MacLean Hall}
  \city{Iowa City}
  \state{Iowa}
  \country{USA}}
\email{apoorv-ingle@uiowa.edu}
\author{J. Garrett Morris}
\affiliation{
  \department{Department of Computer Science}
  \institution{The University of Iowa}
  \streetaddress{14 MacLean Hall}
  \city{Iowa City}
  \state{Iowa}
  \country{USA}}
\email{garrett-morris@uiowa.edu}
\begin{document}

\renewcommand\thetheorem{\arabic{theorem}}
\renewcommand\thedefinition{\arabic{definition}}

\begin{abstract}
We present a new, uniform semantics for Haskell-style overloading. We realize our approach in a new core language, \SysFD, whose metatheory we mechanize in the Lean4 interactive theorem prover. \SysFD is distinguished by its open data types and open functions, each given by a collection of instances rather than by a single definition. We show that \SysFD can encode advanced features of Haskell's of type class systems, more expressively than current semantics of these features, and without assuming additional type equality axioms.
\end{abstract}

\maketitle

\section{Introduction}
\InlineOn

One of Haskell's distinguishing features is its support for open, extensible families of terms and types. Type classes \citep{WadlerB89,Kaes88} capture open families of terms, and are used to bring regularity to the typing of ad-hoc polymorphic operations---such as equality, ordering, or conversion to and from text---within a Hindley-Milner type system~\citep{Hindley69,Milner78,DamasMilner82}. For example, the !elem! function---which searches for a probe element in a list---can be given a single, closed definition, generic over the type of list elements:
\begin{codef}
elem :: Eq a => a -> [a] -> Bool
elem x []       = False
elem x (y : ys) = x == y || elem x ys
\end{codef}
To define !elem!, we rely on the equality operator !(==)!.  (The type class predicate !Eq a! assures that !(==)! is defined for !x! and !y!.)  Equality itself cannot have a closed definition: as the programmer adds new data types, they can also extend the equality operator to those types.  This is the sense in which !(==)! is an open family of terms: while its interpretation for a given type in a given program is fixed, its meaning can be extended as the program is extended with new type definitions.

Type families \citep{SchrijversPJCS08} similarly capture open families of types.  Imagine a generalization of the !elem! function to apply to collections other than lists, such as sets, !ByteString!s, or dictionaries.  The type of the probe is different for each of these examples: we would look for an !a! in a !Set a!, a !Char! in a !ByteString!, or a pair !(k, v)! in a !Map k v! dictionary.  We could capture this variation in the type of !elem! by using a type family !Elem!, associating collection types with their element types:
\begin{codef}
elem :: Eq (Elem c) => Elem c -> c -> Bool
\end{codef}
Of course, !elem! itself must now be an open term family.  As the programmer adds new collection types, they can extend both the !Elem! type family and the !elem! term family.

Despite the intuitive parallels between these ideas, type classes and type families are explained by two very different approaches. The semantics of type classes is given by dictionary passing, a type-preserving (nearly) source-to-source translation proposed by \citet{WadlerB89} in their first account of type classes. On the one hand, because dictionary passing is a type-preserving translation, type classes are readily adoptable in other languages (and have been adopted in Agda, Idris, Isabelle, Rust, and more) and experimental type class features cannot compromise the type safety of the core language (the early development of Haskell included a flourishing of type class features, as described by \citet{PeytonJonesJM97}). On the other hand, the expressiveness of type classes is limited by the expressiveness of the target type system. Jones describe functional dependencies~\citep{Jones00} in terms of a fundamental refinement to the interpretation of polymorphic types~\citep{Jones95}. However, functional dependencies have typically been implemented purely in type inference~\citep{DuckPSS04,SulzmannDPJS07,JonesD08}, not as a feature of the underlying type system. As a result, implementations reject programs that are valid in Jones's theory.

Type families are expressed by extension of Haskell's core language, \SysFC: (valid) type families are translated to (trusted) axioms in \SysFC. This means that the expressiveness of type families is unhindered by limitations in the core type system; examples like those that bedeviled functional dependencies are no problem for type families. New features of type families, however, entail new proofs of the type safety of \SysFC, already a non-trivial task.  Moreover, as summarized by \citet{MorrisE17}, many features of type families have surprising  side conditions, such as the reliance on infinitary unification in closed type families \citep{EisenbergVPJW14}, or limitations on the forms of the right-hand sides of injective type family instances \citep{StolarekJE15}.

We propose a new approach to the semantics of type and term families, realized in a new core functional language \SysFD. \SysFD extends System F with three ideas. It includes first-class equality proofs, called \emph{coercions}. (Our treatment of coercions is broadly similar to that of \SysFC.) It includes \emph{open data types}, which can be extended with new cases. Open data types can be used to capture ideas such as instances of type classes and type families or (user-extensible) codes for types. Of course, closed pattern matching is insufficient to express interesting programs over open data types. \SysFD includes \emph{open functions}, which can also be extended with new cases. To capture the interaction of open functions and open data types, \SysFD simulates a form of (guarded) non-determinism by exploring all possibilities. However, just as in Dijkstra's~\citep{Dijkstra75} use of non-determinism in guarded commands, this should be viewed as a formal artifact to capture the generality of patterns that could be expressed with \SysFD open data types and functions, not as a core part of the language.

The goal of \SysFD is to explore new features and new extensions of existing features, not to be a one-for-one replacement for \SysFC as a core language for current Haskell. The latter would require discussing many features essentially orthogonal to \SysFD's novelty---we would have to explain not just type classes and type families, but also GADTs, data kinds, and so forth. Moreover, \SysFC is already custom-fitted to describing this exact collection of features. Instead, we demonstrate a novel account of functional dependencies. Our account realizes functional dependencies as coercions. This means we capture the full expressiveness of Jones's theory, beyond any widely used implementation of functional dependencies. It gives semantic content to the ``consistency'' and ``covering'' checks \citep{Jones00,SulzmannDPJS07,JonesD08} that validate functional dependency declarations. And, it lies outside the image of dictionary-passing translation, even augmented with coercions, demonstrating the additional expressiveness of \SysFD.

We have mechanized the typing and semantics of \SysFD and its core metatheoretic results, including syntactic type safety, in the Lean4 theorem prover. This metatheory is independent of the translation of particular features. That is, in translating features like functional dependencies to \SysFD, we do not extend \SysFD's type system or its axioms, and so cannot compromise type safety. Instead, we rely on open families of coercions. To show that our translation is meaningful, we must argue that these families can be fully and unambiguously populated. This obligation is easily met for our translation, but we would expect it to become more complicated in translating more complicated features.

In mechanizing our metatheory, we have adapted the Autosubst framework \citep{SchaferTS15,StarkSK19} from Rocq to Lean4. Autosubst provides automation for de Bruijn indexing and parallel substitution. Our proofs of progress and preservation require 574 and 639 lines of code, respectively (not counting additional lemmas). Our translations are also implemented in Lean4 programming language fragment, allowing us to validate their results in terms of our core metatheory. In addition to validating the design of \SysFD, we believe this mechanization shows the utility of Lean4 for general programming language metatheory, and will make our adaptation of Autosubst independently available.

In summary, this paper contributes:
\begin{itemize}
\item The design of \SysFD~\cref{sec:system-fd}, a new core functional language that supports first-class type equality proofs, open data types, and open functions;
\item Automated translations of features of Haskell type classes and their uses into \SysFD~\cref{sec:classes}, providing equivalent or greater expressiveness than encodings of these features in other core languages; and,
\item The metatheory of \SysFD~\cref{sec:metatheory}, mechanized in the theorem prover Lean4.
\end{itemize}
We begin by reviewing type classes and type families in Haskell, as well as their standard semantics~\cref{sec:background}, and conclude with discussions of related~\cref{sec:related} and future~\cref{sec:conclusion} work.

\section{Background}
\label{sec:background}

\subsubsection*{Type classes}

Type classes account for ad hoc polymorphism in a parametric polymorphic setting \citep{Kaes88,WadlerB89}. For example, Haskell's !Monad! type class characterizes models of computation, and is defined by:
\begin{code}
class Applicative m => Monad m where
  return :: a -> m a
  (>>=)  :: m a -> (a -> m b) -> m b
\end{code}
Following this definition, a type !M! can be a member of the !Monad! class if (i) it is already a member of the !Applicative! class, and (ii) it has implementations of the !return! and !(>>=)! methods at the claimed types. These definitions must be specific to each type !M!, of course, but other code can then be parametric over any monad:
\begin{code}
sequence         :: Monad m => [m a] -> m [a]
sequence []       = return []
sequence (m : ms) = m >>= \x -> fmap (x :) (sequence ms)
\end{code}
We are able to use !return! and !(>>=)! in this definition because we know that !m! is a member of the !Monad! class, and !fmap! because we know that !m! is a member of the !Applicative! class, and therefore a member of the !Functor! class.

Type classes are given semantics by translation to \emph{dictionary-passing style}. In this translation, class predicates are interpreted as dictionaries---that is, tuples of their implementations---and class methods are interpreted as projections from those tuples. To continue the example, we could interpret the !Monad! class by a type like:
\begin{code}
type Monad m = (Applicative m, forall a. a -> m a, forall a b. m a -> (a -> m b) -> m b)
applicativeM          :: Monad m -> Applicative m
applicativeM (d, _, _) = d
returnM               :: Monad m -> forall a. a -> m a
returnM (_, ret, _)    = ret
bindM                 :: Monad m -> forall a b. m a -> (a -> m b) -> m b
bindM (_, _, bind)     = bind
\end{code}
An instance of the !Monad! class would then be interpreted as a suitable tuple, and code that used !Monad!s would be translated to take these tuples as arguments:
\begin{code}
sequence :: Monad m -> [m a] -> m [a]
sequence d [] = returnM d []
sequence d (m : ms) =
  bindM d m (\x -> fmap (functorA (applicativeM d))) (x :) (sequence d ms)
\end{code}

Many examples of type classes are more easily thought of as relations among types rather than properties of a single type. For example, to describe models of stateful computation, we need to know not just the type of the model but the type of the state as well. These cases were originally captured with multi-parameter type classes:
\begin{code}
class Monad m => MonadState s m where
  get :: m s
  put :: s -> m ()
\end{code}
The translation of multi-parameter type classes is identical to that of single-parameter type classes.

\subsubsection*{Type families}

Unfortunately, the !MonadState! example leads to counterintuitive types in practice. To see why, consider two (relatively uninteresting) stateful operations:
\begin{code}
getPut   = get   >>= put
putGet x = put x >>= \ () -> get
\end{code}
The most general type of !getPut! is what we might expect:
\begin{code}
getPut :: MonadState s m => m ()
\end{code}
The most general type of !putGet!, however, is less intuitive:
\begin{code}
putGet :: (MonadState s m, MonadState t m) => s -> m t
\end{code}
In fact, both of these types are problematic, but for different reasons. The type of !getPut! is problematic because the type parameter !s! appears in the constraints---and so the choice of !s! can influence the semantics of !getPut!---but does not appear in the type itself. Such a type parameter is called \emph{ambiguous}, and is considered a type error. The !putGet! function is not ill-typed; however, it is still likely counterintuitive that the type of state installed by the !put! operation be completely unrelated to the type of state retrieved by the !get! operation.

The missing component of this definition is that in asserting !MonadState s m!, we likely intended to say that !s! is the \emph{unique} type of state modeled by !m!, not just that !s! is \emph{one} type of state modeled by !m!. Associated types \citep{ChakravartyKPJM05} allow us to tighten the connection:
\begin{code}
class Monad m => MonadState m where
  type State m
  get :: m (State m)
  put :: State m -> m ()
\end{code}
Rather than the state being an arbitrary type variable, it can now be seen as a function of the type !m!. With this definition, !getPut! and !putGet! are both well-typed:
\begin{code}
getPut :: MonadState m => m ()
putGet :: MonadState m => State m -> m (State m)
\end{code}

While type families, like !State!, were initially proposed as part of the class system, it was soon realized that they were an independent feature \citep{SchrijversPJCS08}. While this makes little difference to the example above---the !State! family is useful only in the context of a !MonadState! instance, and vice versa---separating type families from type classes both clarified their semantics (they are not explained by dictionary passing) and allowed them to evolve independently of the class system. Among these extensions are: closed type families \citep{EisenbergVPJW14}, which allow overlapping type family instances, and can be used to express conditionals; injective type families \citep{StolarekJE15}, which make it possible to capture properties like duality as type families; and kind families \citep{WeirichHE13}, which enable richer type-level programming.

Type families are given semantics in \SysFC, a core functional language with first-class type equality proofs called \emph{coercions}. Instances of type families are interpreted as new coercions axioms in \SysFC. For example, an instance of the !MonadState! class above might begin
\begin{code}
instance MonadState (StateM s) where
  type State = s
  ...
\end{code}
where a !StateM s a! wraps a function !s -> (a, s)!. This instance would correspond to a new coercion axiom !State s ~ s! in \SysFC. Extension of type families have corresponding extensions to axioms in \SysFC: closed type families give rise to overlapping chains of axioms, injective type families give rise to injectivity axioms. As these axioms are trusted in \SysFC, the type safety of the resulting language depends not just on the core of \SysFC, but also on the axioms allowed. Thus, more complicated features of type families lead to correspondingly complex type safety proofs, even as the core of \SysFC remains (relatively) simple.

\subsubsection*{Functional dependencies}

Type families were the second mechanism in Haskell to address the difficulties with multi-parameter type classes, preceded by functional dependencies \citep{Jones00}. Functional dependencies have a similar intuition: they annotate classes themselves with intended functional relationships among their parameters. The example above would be annotated:
\begin{code}
class Monad m => MonadState s m | m -> s where ...
\end{code}
The annotation means that, if the constraints !MonadState s m! and !MonadState t m! are both satisfiable, then !s! must be the same type as !t!. With functional dependencies, !putGet! and !getPut! are again well-typed:
\begin{code}
getPut :: MonadState s m => m ()
putGet :: MonadState s m => s -> m s
\end{code}
The type of !getPut! is no longer ambiguous, as !s! is determined by !m! and !m! is unambiguous. As the !put! and !get! operations in !putGet! share the same monad !m!, we can also be sure that they share the same state type !s!.

Unlike type families, functional dependencies have generally been treated as a feature of type inference \citep{DuckPSS04,SulzmannDPJS07,JonesD08}, not as a feature of Haskell's core language or type system. Concretely, the type of !putGet! inferred with functional dependencies is an instantiation of the type inferred without them; the role of the functional dependencies is to guarantee that this apparently more specific type is not actually any less general. \citet{Morris14} proposes a typing rule that uses functional dependencies to justify generalizing types, but his rule is justified only by a denotational approach to the semantics of Haskell.

\section{\SysFDbf}
\label{sec:system-fd}

This section introduces the syntax and typing of \SysFD; the following sections show that \SysFD naturally and expressively captures many features of the Haskell type and class systems~\cref{sec:classes} and present its formal semantics and metatheory, mechanized in the Lean4 theorem prover~\cref{sec:metatheory}. \SysFD differs from other core functional languages in three essential ways:
\begin{itemize}
\item \SysFD includes first-class type equality proofs, or coercions, inspired by those of \SysFC.
\item \SysFD includes open data types, which may have new constructors introduced throughout a program.
\item \SysFD includes open functions, given by a set of guarded definitions throughout a program.
\end{itemize}
Open functions and open data types are closely linked. Values of an open datatype could not be (safely) eliminated using traditional branching constructs, as the open datatype may have more constructors when the eliminator is used than it did when the eliminator was defined. Instead, open data types are eliminated using open functions. In turn, open functions are (usefully) given by collections of definitions, each specific to one constructor of an open data type.

\InlineOff
\begin{figure}
\begin{smalle}
\[
\begin{array}{l@{\hspace{5px}}l@{\qquad}l@{\hspace{5px}}l@{\qquad}l@{\hspace{5px}}l@{\qquad}l@{\hspace{5px}}l}
  \text{Term variables} & x, y & \text{Type variables} & t, u & \text{Term constants} & K & \text{Type constants} & T, C
\end{array}
\]
\begin{doublesyntax}
  \mcl{\text{Kinds}} & \kappa & ::= & \Type \mid \kappa \to \kappa \\
  \text{Types} & \mcr{\sigma, \tau, \upsilon} & ::= & t \mid T \mid \tau \Ap \upsilon \mid \Eq \tau \upsilon \kappa \mid \ForallS t \kappa \sigma \\
  \text{Terms} & \mcr{L, M, N, \eta} & ::= & x \mid K \mid \LamS x \sigma M \mid M \Ap N \mid \TLam t \kappa M \mid M \TAp \sigma \mid \Cast M \eta \\
  & & & \mid & \Case L P M N \mid \Guard M P N \mid \Zero \mid \Choice M N\\
  \mcl{\text{(Coercions)}} & & \mid & \Refl \tau \mid \Sym \eta \mid \Trans \eta \eta \mid \CAp \eta {\eta'} \mid \Fst \eta \mid \Snd \eta \mid \Univ t \kappa \eta \mid \CInst \eta \tau \mid \SimC \eta {\eta'} \\
  \mcl{\text{Patterns}} & P & ::= & K \mid P \TAp \sigma \\
  \mcl{\text{Declarations}} & D, E & ::= & \DataDecl T \kappa \mid \CtorDecl K \sigma \mid \OpenDecl C \kappa \mid \OpenCtorDecl K \sigma \\
  & & & \mid & \MethodDecl x \tau \mid \InstanceDecl x M \mid \LetDecl x \sigma M \\
  \mcl{\text{Environments}} & \Gamma,\Delta & ::= & \Empty \mid \Gamma, t : \kappa \mid \Gamma, x : \sigma \mid \Gamma, \DataSig T \kappa \mid \Gamma , \OpenSig C \kappa \mid \Gamma , K : \sigma \mid \Gamma, \OpenSig x \sigma  \\
  \mcl{\text{(Runtime)}} & & \mid & \Gamma, \InstanceDef x M \mid \Gamma, \LetDef x M
\end{doublesyntax}
\end{smalle}
\caption{Syntax}
\label{fig:syntax}
\end{figure}

\subsubsection*{Syntax.}

\cref{fig:syntax} gives the syntax of \SysFD. $\Eq \tau \upsilon \kappa$ is the type of coercions from $\tau$ to $\upsilon$, and are annotated with the kind $\kappa$ of $\tau$ and $\upsilon$. We will assume that we always have access to a type constructor $(\to) : \star \to \star \to \star$, which we will typically write infix. $\Cast M \eta$ casts $M$ by the coercion $\eta$: if $M$ has type $\sigma$, and $\eta$ has type $\Eq \sigma \tau \kappa$, then $\Cast M \eta$ has type $\tau$. Our grammar of coercions is similar to that in \SysFC. The values of coercion type are trees of choices with leaves of $\Refl \tau$ of type $\Eq \tau \tau \kappa$; in particular, unlike in semantics of type families in \SysFC, we will not introduce new coercion axioms. Pattern matching over closed data types is expressed with $\mathsf{if}$: if the scrutinee $L$ is built with the constructor identified by $P$, evaluation continues with the consequent $M$; otherwise, it continues with $N$. Pattern matching over open data types is expressed with $\mathsf{guard}$. A $\mathsf{guard}$ has no $\mathsf{else}$ branch; instead, other cases must be handled by separate (guarded) definitions. Finally, programs consist of lists of declarations. Conceptually, the declaration of a closed datatype ($\DataDecl T \kappa$) and the definition of its constructors ($\CtorDecl K \sigma$) are linked; we separate them only to simplify the presentation. In contrast, to be of any interest, the declaration of an open datatype ($\OpenDecl C \kappa$) or open function ($\MethodDecl x \sigma$) must be separate from the definitions of its constructors ($\OpenCtorDecl K \sigma$) or instances ($\InstanceDecl x M$). Environments record not just the types of open functions and \lstinline!let!-bound variables, but also their definitions. The latter are needed to define the reduction relation, but not to define typing.

\begin{figure}
\begin{smalle}
\[
\begin{gathered}[t]
\fbox{$\EnvJ \Gamma$}\\
\ib{\irule{ };
          {\EnvJ\Empty}}
\isp
\ib{\irule{\EnvJ \Gamma}
          {\KindJ \Gamma \sigma \Type};
          {\EnvJ {\Gamma, x : \sigma}}}
\isp
\ib{\irule{\EnvJ \Gamma}
          {\KindJ \Gamma \sigma \Type};
          {\EnvJ {\Gamma, K : \sigma}}}
\isp
\ib{\irule{\EnvJ \Gamma};
          {\EnvJ {\Gamma, t : \kappa}}}
\\
\ib{\irule{\EnvJ \Gamma};
          {\EnvJ {\Gamma, \DataSig T \kappa}}}
\isp
\ib{\irule{\EnvJ \Gamma};
          {\EnvJ {\Gamma, \OpenSig C \kappa}}}
\isp
\ib{\irule{\KindJ \Gamma \sigma \Type};
          {\EnvJ {\Gamma, \OpenSig x \sigma}}}
\\
\ib{\irule{\OpenSig x \sigma \in \Gamma}
          {\TypeJ \Gamma M \sigma};
          {\EnvJ {\Gamma, \InstanceDef x M}}}
\isp
\ib{\irule{x : \sigma \in \Gamma}
          {\TypeJ \Gamma M \sigma};
          {\EnvJ {\Gamma, \LetDef x M}}}
\end{gathered}
\qquad
\begin{gathered}[t]
\fbox{$\KindJ \Gamma \sigma \kappa$}
\\
\ib{\irule{\EnvJ \Gamma}
          {t : \kappa \in \Gamma};
          {\KindJ \Gamma t \kappa}}
\isp
\ib{\irule{\KindJ \Gamma \tau {\kappa'\to\kappa}}
          {\KindJ \Gamma \upsilon {\kappa'}};
          {\KindJ \Gamma {\tau\Ap\upsilon} \kappa}}
\\
\ib{\irule{\KindJ {\Gamma, t : \kappa} \tau \Type};
          {\KindJ \Gamma {\ForallS t \kappa \sigma} \Type}}
\isp
\ib{\irule{\KindJ \Gamma \tau \kappa}
          {\KindJ \Gamma \upsilon \kappa};
          {\KindJ \Gamma {\Eq \tau \upsilon \kappa} \Type}}
\\
\ib{\irule{\EnvJ \Gamma}
          {\DataSig T \kappa \in \Gamma \lor \OpenSig T \kappa \in \Gamma};
          {\TypeJ \Gamma T \kappa}}
\end{gathered}
\]
\end{smalle}
\caption{Environment formation and kinding} %
\label{fig:env-kind-typequ}
\end{figure}

\subsubsection*{Types and kinds.}

\cref{fig:env-kind-typequ} gives the rules for environment formation and kinding. We do not distinguish between closed and open types at the kind level, avoiding the need to discuss subkinding or kind polymorphism. Coercions are allowed between types of arbitrary kind, not just of base kind. We assume that quantified types are identified up to $\alpha$-equivalence, and this is the only non-trivial case of type equality. (Our mechanized formalization uses de Bruijn representations of binders, so this point becomes irrelevant.)

\begin{figure}
\begin{smalle}
\begin{gather*}
\fbox{$\TypeJ \Gamma M \sigma$}
\\
\ib{\irule{\EnvJ \Gamma}
          {x : \sigma \in \Gamma};
          {\TypeJ \Gamma x \sigma}}
\rsp
\ib{\irule{\EnvJ \Gamma}
          {K : \sigma \in \Gamma};
          {\TypeJ \Gamma K \sigma}}
\rsp
\ib{\irule{\EnvJ \Gamma}
          {\OpenSig x \sigma \in \Gamma};
          {\TypeJ \Gamma x \sigma}}
\rsp
\ib{\irule{\TypeJ {\Gamma, x : \sigma} M \tau};
          {\TypeJ \Gamma {\LamS x \sigma M} {\sigma\to\tau}}}
\\
\ib{\irule{\TypeJ \Gamma M {\tau\to\upsilon}}
          {\TypeJ \Gamma N \tau};
          {\TypeJ \Gamma {M \Ap N} \upsilon}}
\rsp
\ib{\irule{\TypeJ {\Gamma, t : \kappa} M \sigma};
          {\TypeJ \Gamma {\TLam t \kappa M} {\ForallS t \kappa \sigma}}}
\rsp
\ib{\irule{\TypeJ \Gamma M {\ForallS t \kappa \sigma}}
          {\KindJ \Gamma \tau \kappa};
          {\TypeJ \Gamma {M \TAp \tau} {\Subst t \tau \sigma}}}
\rsp
\ib{\irule{\TypeJ \Gamma M \tau}
          {\TypeJ \Gamma \eta {\Eq \tau \upsilon \Type}};
          {\TypeJ \Gamma {\Cast M \eta} \upsilon}}
\\
\ib{\irule{\TypeJ \Gamma L \sigma}
          {\DataJ \Gamma \sigma}
          {\TypeJ \Gamma P {\forall \overline{t:\kappa}. \overline\tau \to \sigma}}
          {\TypeJ \Gamma M {\forall \overline{t:\kappa}. \overline\tau \to \upsilon}}
          {\TypeJ \Gamma N \upsilon};
          {\TypeJ \Gamma {\Case L P M N} \upsilon}}
\rsp
\ib{\irule{};
          {\TypeJ \Gamma \Zero \sigma}}
\\
\ib{\irule{\TypeJ \Gamma L \sigma}
          {\OpenJ \Gamma \sigma}
          {\TypeJ \Gamma P {\forall \overline{t:\kappa}. \overline\tau \to \sigma}}
          {\TypeJ \Gamma M {\forall \overline{t:\kappa}. \overline\tau \to \upsilon}};
          {\TypeJ \Gamma {\Guard L P M} \upsilon}}
\rsp
\ib{\irule{\TypeJ \Gamma M \sigma}
          {\TypeJ \Gamma N \sigma};
          {\TypeJ \Gamma {\Choice M N} \sigma}}
\end{gather*}
\[
\begin{gathered}
\fbox{$\DataJ \Gamma \tau$}\\
\ib{\irule{\DataSig T \kappa \in \Gamma};
          {\DataJ \Gamma T}}
\rsp
\ib{\irule{\DataJ \Gamma \tau};
          {\DataJ \Gamma {\tau\Ap\upsilon}}}
\end{gathered}
\qquad\qquad
\begin{gathered}
\fbox{$\OpenJ \Gamma \tau$}\\
\ib{\irule{\OpenSig C \kappa \in \Gamma};
          {\OpenJ \Gamma C}}
\rsp
\ib{\irule{\OpenJ \Gamma \tau};
          {\OpenJ \Gamma {\tau\Ap\upsilon}}}
\end{gathered}
\]
\end{smalle}
\caption{Typing: non-coercion terms.}
\label{fig:typing}
\end{figure}
\InlineOn
\subsubsection*{Terms.}

Typing of non-coercion terms is given in \cref{fig:typing}. The typing rules for term and type abstractions are standard. Conditionals and guards are typed similarly; we will focus on the conditional case. For an example, consider the Haskell definition:
\begin{code}
mapMaybe :: (a -> b) -> Maybe a -> Maybe b
mapMaybe f (Just x) = Just (f x)
mapMaybe f Nothing  = Nothing
\end{code}
This function would be expressed in \SysFD by:

\begin{code}
/\ a b. \ f m. if m is Just [a] then \ x. Just [a] (f x) else Nothing [a]
\end{code}

\noindent
(We omit type annotations on type and term arguments, as they can be reconstructed from context.) To express the pattern match, we begin by testing whether $m$ was constructed by !Just [a]!. Note that patterns are not constructors alone, but constructors applied to type arguments. We then require the consequent to have the type !a -> Maybe b!, that is, we align the type of the consequent with the type of the pattern (!a -> Maybe a!) but replacing the codomain with the result type of the $\mathsf{if}$ expression. The alternative is required to have type !Maybe b!. Guards are typed identically, but without an alternative branch. We will see many examples of guards in the following section. Our typing rules for the consequent branches of guards and conditions limits the need to account for binders to type and term abstractions, simplifying the presentation overall. Conditionals are limited to apply to closed data types (via the judgment $\DataJ \Gamma \sigma$) and guards are limited to apply to open data types (via the judgment $\OpenJ \Gamma \sigma$); we do not insist that pattern matches over closed data types be exhaustive, but doing so would be an entirely routine extension of our rules.
\InlineOff
\subsubsection*{Guards and non-determinism.}

Pattern matching over open data types uses open functions. Intuitively, an open function is given by a collection of definitions, each of which is guarded on one or more values of an open data type. We formally capture this intuition using non-determinism. Our design parallels Dijkstra's \citep{Dijkstra75} guarded commands, which he interpreted via a non-deterministic conditional expression $(\mathtt{if} \; B_1 \to M_1 \talloblong \cdots \talloblong B_n \to M_n \; \mathtt{fi})$. The term $\Choice M N$ denotes the non-deterministic evaluation of $M$ and $N$, choosing whichever one does not fail. A guard that does not match reduces to failure, represented by $\Zero$. An open function evaluates to a choice between its definitions, all of which except one should promptly reduce to $\Zero$. Of course, we could have imposed this structure on open functions more explicitly, and in doing so removed the need for the $\Choice M N$ and $\Zero$ term forms. However, doing so would have complicated our mechanization, and might restrict future uses of \SysFD. Moreover, it would supply no actual metatheoretic guarantees: Even with such restrictions, translations would still have to guarantee that they produced open functions without too many or too few cases.

\subsubsection*{Coercions.}

\begin{figure}
\begin{smalle}
\begin{gather*}
\fbox{$\CoerceJ \Gamma \eta \tau \upsilon \kappa$}
\\
\ib{\irule{\KindJ \Gamma \tau \kappa};
          {\CoerceJ \Gamma {\Refl \tau} \tau \tau \kappa}}
\rsp
\ib{\irule{\CoerceJ \Gamma \eta \upsilon \tau \kappa};
          {\CoerceJ \Gamma {\Sym \eta} \tau \upsilon \kappa}}
\rsp
\ib{\irule{\CoerceJ \Gamma \eta \sigma \tau \kappa}
          {\CoerceJ \Gamma {\eta'} \tau \upsilon \kappa};
          {\CoerceJ \Gamma {\Trans \eta {\eta'}} \sigma \upsilon \kappa}}
\\
\ib{\irule{\CoerceJ \Gamma \eta \tau {\tau'} {\kappa' \to \kappa}}
          {\CoerceJ \Gamma {\eta'} \upsilon {\upsilon'} {\kappa'}};
          {\CoerceJ \Gamma {\CAp \eta {\eta'}} {\tau \Ap \upsilon} {\tau' \Ap \upsilon'} \kappa}}
\rsp
\ib{\irule{\KindJ \Gamma \tau \kappa}
          {\CoerceJ \Gamma \eta {\tau\Ap\upsilon} {\tau'\Ap\upsilon'} {\kappa'}};
          {\CoerceJ \Gamma {\Fst \eta} \tau {\tau'} \kappa}}
\rsp
\ib{\irule{\KindJ \Gamma \upsilon \kappa}
          {\CoerceJ \Gamma \eta {\tau\Ap\upsilon} {\tau'\Ap\upsilon'} {\kappa'}};
          {\CoerceJ \Gamma {\Snd \eta} \upsilon {\upsilon'} \kappa}}
\\
\ib{\irule{\CoerceJ {\Gamma, t : \kappa} \eta \tau \upsilon {\kappa'}};
          {\CoerceJ \Gamma {\Univ t \kappa \eta} {\ForallS t \kappa \tau} {\ForallS t \kappa \upsilon} {\kappa'}}}
\rsp
\ib{\irule{\CoerceJ \Gamma \eta {\ForallS t \kappa \tau} {\ForallS t \kappa \upsilon} {\kappa'}}
          {\KindJ \Gamma \sigma \kappa};
          {\CoerceJ \Gamma {\CInst \eta \sigma} {\Subst t \sigma \tau} {\Subst t \sigma \upsilon} {\kappa'}}}
\rsp
\ib{\irule{\CoerceJ \Gamma \eta \tau {\tau'} \kappa}
          {\CoerceJ \Gamma {\eta'} \upsilon {\upsilon'} \kappa};
          {\CoerceJ \Gamma {\SimC \eta {\eta'}} {(\Eq \tau \upsilon \kappa)} {(\Eq {\tau'} {\upsilon'} \kappa)} \Type}}
\end{gather*}
\end{smalle}
\caption{Typing: coercions}
\label{fig:typing-coercions}
\end{figure}

\cref{fig:typing-coercions} gives the typing rules for coercions. (We separate coercions from the other terms solely for presentation.) The only coercion values are reflexivity $\Refl\tau$ for all well-kinded types. The rules for symmetry, transitivity, and type applications are all identical to those in \SysFC. We have a less elaborate treatment of coercions for quantified types, as we choose to build $\alpha$-equivalence into type equality rather than capturing it with coercions. Here we have two motivations. First, were $\alpha$-equivalence not built into type equality, then a coercion between two $\alpha$-equivalent quantified types could not reduce to reflexivity. Second, our mechanization uses a de Bruijn representation of quantifiers, where $\alpha$-equivalent quantified types are syntactically identical. Finally, we include coercions for the primitive function and coercion types.

\begin{figure}
\begin{smalle}
\begin{gather*}
\fbox{$\DeclJ \Gamma D \Delta$}
\\
\ib{\irule{\vdash \Gamma};
          {\DeclJ \Gamma {\DataDecl T \kappa} {\Gamma,\DataSig T \kappa}}}
\rsp
\ib{\irule{\vdash \Gamma};
          {\DeclJ \Gamma {\OpenDecl C \kappa} {\Gamma,\OpenSig C \kappa}}}
\rsp
\ib{\irule{\KindJ \Gamma \sigma \Type};
          {\DeclJ \Gamma {\MethodDecl x \sigma} {\Gamma, \OpenSig x \sigma}}}
\\
\ib{\irule{\KindJ \Gamma \sigma \Type}
          {\DataJ \Gamma \upsilon};
          {\DeclJ \Gamma {\CtorDecl K \sigma} {\Gamma, K : \sigma}}!
          {\sigma = {\forall {\overline{t : \kappa}}. \overline\tau \to \upsilon}}}
\rsp
\ib{\irule{\KindJ \Gamma \sigma \Type}
          {\OpenJ \Gamma \upsilon};
          {\DeclJ \Gamma {\OpenCtorDecl K \sigma} {\Gamma, K : \sigma}}!
          {\sigma = {\forall {\overline{t : \kappa}}. \overline\tau \to \upsilon}}}
\\
\ib{\irule{\OpenSig x \sigma \in \Gamma}
          {\TypeJ \Gamma M \sigma};
          {\DeclJ \Gamma {\InstanceDecl x M} {\Gamma, \InstanceDef x M}}}
\rsp
\ib{\irule{\TypeJ \Gamma M \sigma};
          {\DeclJ \Gamma {\LetDecl x \sigma M} {\Gamma, x : \sigma, \LetDef x M}}}
\end{gather*}
\end{smalle}
\caption{Typing: declarations}
\label{fig:typing-declarations}
\end{figure}

\subsubsection*{Declarations}

\cref{fig:typing-declarations} gives the typing rules for declarations; the judgment $\DeclJ \Gamma D \Delta$ denotes that $D$ is well-typed in environment $\Gamma$, generating environment $\Delta$. We insist that instances of open functions all have identical types---that is, that the instances are all as general as the initial declaration. Type-specific behavior in open functions is enabled by guards within individual definitions.

\section{Encoding Type Classes and Functional Dependencies in \SysFDbf}
\label{sec:classes}
\InlineOn

This section demonstrates the encoding of type classes and functional dependencies in \SysFD.
We describe a simplified surface language, sufficient to demonstrate the definition and use of type classes and functional dependencies.
We then illustrate the translation of our surface language to \SysFD.
Finally, we describe our implementation of the translation, and identify important properties of the translation.

\subsection{Surface Language}

\begin{figure}
\begin{smalle}
\[
\begin{array}{l@{\hspace{5px}}l@{\qquad}l@{\hspace{5px}}l@{\qquad}l@{\hspace{5px}}l}
  \text{Type variables} & t, u & \text{Type constants} & T, C & \text{Indices} & m, n \\
  \text{Term variables} & x, y & \text{Term constants} & K
\end{array}
\]
\begin{doublesyntax}
  \mcl{\text{Kinds}} & \kappa & ::= & \Type \mid \kappa \to \kappa \\
  \text{Types} & \mcr{\sigma, \tau, \upsilon} & ::= & t \mid T \mid \tau \Ap \upsilon \mid \ForallS t \kappa \sigma \\
  \text{Terms} & \mcr{L, M, N} & ::= & x \mid K \mid \LamS x \sigma M \mid M \Ap N \mid \TLam t \kappa M \mid M \TAp \sigma \mid \Case L P M N\\
  & & & \mid & \Hole \sigma \mid \Annotate t \sigma \\
  \mcl{\text{Patterns}} & P & ::= & K \mid P \TAp \sigma \\
  \mcl{\text{Declarations}} & D, E & ::= & \DataDecl T \kappa\ (\CtorDecl K \sigma; \ldots) \\
  & & & \mid & \ClassDecl C \kappa\ (\mathsf{super} \, \sigma; \ldots)\ (\FunDep {m, \dots} n; \ldots)\ (\ClassMethodDecl k \tau; \ldots)\\
  & & & \mid & \ClassInstanceDecl C\ (\ClassMethodInstanceDecl M; \ldots) \mid \LetDecl x \sigma M
\end{doublesyntax}
\end{smalle}
\caption{Surface language syntax}
\label{fig:surface-syntax}
\end{figure}

One might expect our surface language to be just Haskell 98 with type class extensions.
However, describing the translation of Haskell 98 would involve many orthogonal issues, such as Hindley-Milner style type inference and generalization.
To focus on the novelty of \SysFD, we choose a surface language with explicit polymorphism, but in which instance arguments and functional dependency coercions are still implicit.

Our surface language is shown in \cref{fig:surface-syntax}. The term language is a subset of \F, with two additional features. Annotations ($\Annotate t \sigma$) ascribe types to terms. While they can appear anywhere, our translation requires annotations in two specific cases (1) on a non-variable head of application spines, and (2) on the pattern, scrutinee, and consequent of conditional expressions.
This is simply to reduce the complexity of the translation, and does not represent any fundamental
limitation. Holes ($\Hole \sigma$) appear where instance arguments need to be computed. Again to simplify the translation, we require that function terms are fully $\eta$-expanded. For example, \cref{fig:h98-surface} shows a Haskell-style definition on the top and its representation in our surface language at the bottom.

\begin{figure}[ht]
\centering
\begin{code}
example :: Eq a => (Bool -> a) -> a -> Bool
example x y = eq (x ((\ z. z) True)) y
\end{code}
\begin{code}
let example :: forall a. Eq a => (Bool -> a) -> a -> Bool
  = /\ a :: *. \ e :: Eq a. \ x :: Bool -> a. \ y :: a.
     eq [a] (_ :: Eq a) (x (((\ z :: Bool. z) :: Bool -> Bool)) True) y
\end{code}
  \caption{Representing Haskell 98 in a Surface Language}
  \label{fig:h98-surface}
\end{figure}

\noindent
We have fully $\eta$-expanded the term, making type and instance arguments explicit.
We have identified where an instance argument is required, by the !eq! function, but we do not have to provide that argument.
Furthermore, we do not have to specify the locations of coercions; their use will be inferred automatically in translation when types fail to align.

Declarations include data types, classes, class instances, and normal term definitions.
Data types are equipped with a list of their constructors.
Classes are equipped with their kind and lists of superclasses, functional dependencies, and class methods.
A functional dependency includes the set of determining parameters and the determined parameter, represented as indices.
For example, the !MonadReader! class would have kind !* -> (* -> *) -> *!, and the functional dependencies !1 -> 0!.
Class instances include their set of method definitions.
Note that class instances do not have to say anything about their functional dependencies.

\subsection{Type Classes}

We start with an example simplified from the \lstinline!Eq! and \lstinline!Ord! classes in Haskell's prelude. This example demonstrates the parallels between interpreting type classes via dictionaries and via open data types. The left-hand column of \cref{fig:eq-ord} defines each class, along with one method for each class. We also declare a superclass relationship between the classes: every instance of \lstinline!Ord! must also be an instance of \lstinline!Eq!. We use a Haskell style syntax of the corresponding surface syntax\footnote{\protect\url{Hs/Examples/SuperClasses.lean}} to keep the presentation lucid.

\begin{figure}
\begin{minipage}[t]{0.38\textwidth}
\begin{codef}
class Eq a where
  (==) :: a -> a -> Bool
instance Eq Bool where
  (==) = (\ b c -> not (b `xor` c))
          :: Bool -> Bool -> Bool
\end{codef}
\end{minipage}
\begin{minipage}[t]{0.60\textwidth}
\begin{codef}
open Eq :: * -> *
open eq :: forall a. Eq a -> a -> a -> Bool
instance EqBool :: forall t. Bool ~ t -> Eq t
instance eq = /\ a. \ d. guard d is EqBool [a] then
  \ h. (\ b c. not (b `xor` c)) |>
    $\Refl{\to}$ @ h @ ($\Refl\to$ @ h @ $\Refl{\mathsf{Bool}}$)
\end{codef}
\end{minipage}
\\
\begin{minipage}[t]{0.38\textwidth}
\begin{codef}
class Eq a => Ord a where
  (<) :: a -> a -> Bool

instance Ord Bool where
  (<) = (\ b c -> not b || c)
         :: Bool -> Bool -> Bool
\end{codef}
\end{minipage}
\begin{minipage}[t]{0.60\textwidth}
\begin{codef}
open Ord :: * -> *
open ordEq :: forall a. Ord a -> Eq a
open lt :: forall a. Ord a -> a -> a -> Bool
instance OrdBool :: forall t. Bool ~ t -> Ord t
instance ordEq = /\ a. \ d.
  guard d is OrdBool [a] then EqBool [a]
instance lt = /\ a. \ d. guard d is OrdBool [a] then
  \ h. (\ b c. not b || c) |>
    $\Refl{\to}$ @ h @ ($\Refl\to$ @ h @ $\Refl{\mathsf{Bool}}$)
\end{codef}
\end{minipage}
\caption{\SysFD translation of simple type classes}
\label{fig:eq-ord}
\end{figure}

\subsubsection*{Classes and methods}

The right-hand column contains the translations into \SysFD\footnote{\protect\url{SystemFD/Examples/SuperClasses.lean}}. We omit annotations on type and term abstractions, as they can be inferred from context. The translation of the !Eq! class itself is straightforward: we interpret the class itself as an
open type definition, and its method as an open function definition. Note that the kind of the !Eq! type is a direct transliteration of the surface language type while the type of the !(==)! method is the expected translated type where the implicit !Eq a! type in the surface level is made explicit in the \SysFD type.

\subsubsection*{Instances}

As an intuition for the translation of instances, imagine that we were using closed data types to represent a fixed set of instances. To capture the typing, we might use GADTs:
\begin{code}
data Eq :: * -> * where
  EqBool :: Eq Bool
  EqList :: Eq t -> Eq [t]
  ...
\end{code}
Class methods would then be defined as GADT pattern matches:
\begin{code}
eq :: Eq t -> t -> t -> Bool
eq d x y = case d of
  EqBool -> not (x `xor` y)
  EqList d' -> ... eq d' ...
  ...
\end{code}
where in the !EqBool! branch we then have access to the type refinement that !t! is !Bool!.

Our actual translation uses the same approach with open data types and open functions, inspired by the Henry Ford encoding of GADTs \citep{McBride00}. We introduce an !EqBool! constructor of the !Eq! type; !EqBool! can construct an !Eq t! of any type !t!, so long as !t! is !Bool!. Our translation of the implementation of equality for Booleans is similarly generic: it is defined for any !a! and value !d! of !Eq a!. In the case that !d! is !EqBool!, pattern matching provides the coercion !h : Bool ~ a!. In translating the body, we discover an apparent type mismatch: !Bool -> Bool -> Bool! is not !a -> a -> Bool!. However, we are able to automatically synthesize the needed coercion.

\subsubsection*{Superclasses}

The translation of the !Ord! class follows a similar pattern.  The superclass can be expressed as just another open function, mapping values of the  !Ord! type to values of the !Eq! type. In translating the instance of the !Ord! class, we provide an instance-specific implementation of the superclass function, following the same typing pattern that we applied in providing instances-specific implementations of the class methods.

\subsubsection*{Method calls}

In a dictionary-passing translation of Haskell type classes, class predicates are interpreted as dictionaries of method implementations. Code that uses type classes is then interpreted to take explicit dictionary arguments, and class methods are selectors for those dictionaries. The interpretation of code using type classes in \SysFD is very similar to the dictionary-passing interpretation. For example, a definition of the $\leq$ operator:
\begin{code}
lte :: forall a. Ord a => a -> a -> Bool
lte = /\ a. \ d x y -> (<) [a] (_ :: Ord a) x y || (==) [a] (_ :: Eq a) x y
\end{code}
would be translated as:
\begin{code}
lte :: forall a. Ord a -> a -> a -> Bool
lte = /\ a. \ d x y. lt [a] d x y || eq [a] (ordEq [a] d) x y
\end{code}
In fact, this code seems identical to the dictionary-passing translation. The difference is in the typing of the instances and class methods. Whereas in dictionary passing translation !d! would contain the implementation of !lt! and the superclass dictionary, and !lt! and !ordEq! would be mere projections, in \SysFD{} !lt! and !ordEq! are open functions and the choice of their implementation is guided by the instance value !d!.

\subsubsection*{Overlapping instances}

Overlapping instances are an occasionally used-feature of the Haskell class system in which the compiler selects among class instances based on ``specificity''.
Perhaps surprisingly, overlapping instances pose no particular challenge for our translation.
In our translation, class method implementations are determined by values representing the instance selected, not (directly) from their types.
Thus, whatever method is used to resolve overlapping instances, the \SysFD image of instance selection will be well-defined.

\subsection{Functional Dependencies}
\label{sec:fundeps}

Functional dependencies were intended to augment type classes with type refinement \citep{Jones00}. In practice, however, their effect was limited to informing type inference, not typing itself. In \SysFD, we can capture the intended behavior of functional dependencies directly. \cref{fig:fundeps-instances} shows the translation of a class with functional dependencies\footnote{\protect\url{Hs/Examples/FunDepsInjective.lean}}, and two of its instances, to \SysFD{}.

\begin{figure}
\begin{minipage}[t]{0.29\textwidth}
\begin{codef}
class F t u
  | t -> u,
    u -> t
\end{codef}
\end{minipage}
\begin{minipage}[t]{0.69\textwidth}
\begin{codef}
open F :: * -> * -> *
open fdFwd :: forall t u v. F t u -> F t v -> u ~ v
open fdBwd :: forall t u v. F t u -> F v u -> t ~ v
\end{codef}
\end{minipage}
\\
\begin{minipage}[t]{0.29\textwidth}
\begin{codef}
instance F Int Bool
\end{codef}
\end{minipage}
\begin{minipage}[t]{0.69\textwidth}
\begin{codef}
instance FIB :: forall a b. Int ~ a -> Bool ~ b -> F a b
instance fdFwd = /\ t u v. \ d_1 d_2.
  guard d_1 is FIB [t] [u] then \ h_1 h_2.
  guard d_2 is FIB [t] [v] then \ k_1 k_2.
    sym h_2 ;; k_2
instance fdBwd = ...
\end{codef}
\end{minipage}
\\
\begin{minipage}[t]{0.29\textwidth}
\begin{codef}
instance F a b =>
  F (Maybe a) (Maybe b)
\end{codef}
\end{minipage}
\begin{minipage}[t]{0.69\textwidth}
\begin{codef}
instance FMM :: forall a b a' b'.
  Maybe a' ~ a -> Maybe b' ~ b -> F a' b' -> F a b
instance fdFwd = /\ t u v. \ d_1 d_2.
  guard d_1 is FMM [t] [u] then /\ a' b'. \ h_1 h_2 e_1.
  guard d_2 is FMM [t] [v] then /\ a'' b''. \ k_1 k_2 e_2.
  let j :: a' ~ a'' = (h_1 ;; sym k_1).2 in
  let c :: b' ~ b'' = fdFwd[a''][b'][b''] (e_1 |> $\Refl{\mathtt{F}}$ @ j @ $\Refl{\mathtt{b'}}$) e_2
  in sym h_2 ;; $\Refl{\mathtt{Maybe}}$ @ c ;; k_2
instance fdBwd = ...
\end{codef}
\end{minipage}
\caption{\SysFD translation of classes and instances with functional dependencies}
\label{fig:fundeps-instances}
\end{figure}

\subsubsection*{Classes.}

First, we consider the translation of the class itself. In addition to introducing an open data type !F! for the class, we introduce two open functions !fdFwd! and !fdBwd! to implement the functional dependencies. Notice that the type of !fdFwd! captures the definition of a functional dependency almost directly: given two instances of !F!, where the determining parameters (!t!) are the same, !fdFwd! provides a coercion equating the determined parameters (!u!).
Similarly, the open function !fdBwd! provides a coercion equating the determined parameter (!t!) with the determining parameter (!u!).

\subsubsection*{Instances.}

Next, we consider a simple instance of !F!, relating !Int! and !Bool!. The translation of the instance itself is immediate, following the pattern we have already established. More interesting is the implementation of the !fdFwd! method. We are given two instances, !d_1! of type !F t u! and !d_2! of type !F t v!. At this point we only have one constructor of !F!, so we guard that both instances were created with that constructor. Doing so brings into scope the type refinements captured by those
constructors, particularly !h_2! of type !Bool ~ u! and !k_2! of type !Bool ~ v!. Composing these coercions gives the required proof that !u ~ v!. We elide the translation of !fdBwd! as it is analogous to the construction of !fdFwd!.

At the bottom of the figure, we consider a less simple instance of !F!, which relates !Maybe a! and !Maybe b! in case !F! already relates !a! and !b!. The translation of this instance includes both the requisite coercions and the assumed instance of !F!. Again, the implementation of !fdFwd! contains the interest. We suppose that both instances !d_1! and !d_2! were constructed by !FMM!; as !FMM! has ``existentially quantified'' type variables !a'! and !b'!, the guards bind those types as well as the coercions. Intuitively, the argument here is that if we have instances !F (Maybe a) (Maybe b)! and !F (Maybe a) (Maybe c)!, then we must also have instances !F a b! and !F a c!, and so that !b ~ c! and !Maybe b ~ Maybe c!. The term spells this argument out in detail; additional complexity arises because of the fact that the first argument to !F! is !Maybe a! is captured by the type refinements !h_1! and !k_1!.

\subsubsection*{Conditions on instances.}

The need to define the !fdFwd! method may seem concerning: how do the instance declarations motivate these definitions? In fact, these are just the semantic realization of what \citet{JonesD08} call the \emph{covering condition} on instances of classes with functional dependencies. They state the covering condition syntactically in terms of the type variables appearing in the class instance. We claim that our version captures the semantic intent of the covering condition: that any two type class predicates satisfied by the same instance of the class must respect the functional dependency. \citet{SulzmannDPJS07} propose a stronger condition, which they call the \emph{coverage condition}, which would reject the instance of !F! in the example. We hope that the \SysFD account of functional dependencies shows that, while the coverage condition may be helpful for type inference with functional dependencies, the relaxed version of the condition proposed by Jones and Diatchki cannot lead to unsoundness, or to violations of the functional dependency.

We have defined two instances of the !fdFwd! method, for the cases where the instances of !F! are built from the same constructor. What about the remaining cases, in which the instances are built from different constructors. Here is (an attempt to define) one such case; the other case is parallel.
\begin{codef}
instance fdFwd = /\ t u v. \ d_1 d_2.
  guard d_1 is FIB [t] [u] then \ h_1 h_2.
  guard d_2 is FMM [t] [v] then /\ a' b'. \ k_1 k_2 e_1.
    -- absurd: h_1 ;; sym k_1 :: Int ~ Maybe a'
\end{codef}
This instance of !fdFwd! is guarded by !d_1! having been constructed via the !F Int Bool! instance, while !d_2! was constructed via the !F (Maybe a) (Maybe b)! instance. Intuitively, this case is impossible to reach: !d_1! and !d_2! seem to place conflicting requirements on the type parameter !t!. And indeed, the typing carries out this intuition: in the body of the definition, we have access to the coercion !h_1 ;; sym k_1! proving that !Int ~ Maybe a'!. Checking for the absurdity of these cases corresponds to the other condition in validating functional dependencies, which both Sulzmann et al. and Jones and Diatchki call \emph{consistency}.

\begin{figure}
\begin{minipage}[t]{0.32\textwidth}
\begin{codef}
instance F (Maybe Int) Int
\end{codef}
\end{minipage}
\begin{minipage}[t]{0.67\textwidth}
\begin{codef}
instance FMI :: forall a b. Maybe Int ~ a -> Int ~ b -> F a b
instance fdFwd = /\ t u v. \ d_1 d_2.
  guard d_1 is FMI [t] [u] then \ h_1 h_2.
  guard d_2 is FMM [t] [v] then /\ a' b'. \ k_1 k_2 e_1.
    -- impossible: need to show u ~ Int /~ Maybe a' ~ v
    -- in a consistent context
\end{codef}
\end{minipage}
\caption{Functional dependencies: erroneous instances}
\label{fig:fundeps-impossible}
\end{figure}

\subsubsection*{Invalid instances.}

A final concern: what if the surface language implementation of functional dependencies were incorrect, and admitted an instance that violated the functional dependency on !F!. Such an instance is shown in \cref{fig:fundeps-impossible}: the instance !F (Maybe Int) Int! conflicts with !F! !(Maybe a)! !(Maybe b)! (when !a! is instantiated to !Int!). While translating the instances itself to \SysFD poses no problem, it is no longer possible to generate well-typed instances for all the possibilities of !fdFwd! and !fdBwd!. The case shown in the figure is guarded by !d_1! having been constructed via the new instance, while !d_2! is constructed via the instance !F (Maybe a) (Maybe b)!. In this case, the context is consistent, but the function needs to return an impossible typing assertion that !Int ~ Maybe a'!. There is no such coercion value: the only value of coercion type is reflexivity (\cref{thm:canonicity}), which will not do. Our current presentation does not rule out divergence at coercion type, and so it is technically possible to provide a diverging term of the required type. However: (i) such a coercion can never lead to a violation of type soundness, exactly because it diverges, and (ii) we imagine that \SysFD programs will be produced by translation from user code, and such a translation would have no reason to generate diverging coercions.

\subsubsection*{Typing with functional dependencies.}

Defining classes with functional dependencies is one thing; using them is another. On the left of \cref{fig:fundeps-polymorphic}, we show a (silly) definition of a function\footnote{\protect\url{Hs/Examples/FunDeps1.lean}} in terms of !F! and its instances. While this definition claims to be polymorphic, considering the instances of !F! suggests that the only possible instantiation of !t! is to !Bool!. In fact, following Jones's original account of principal satisfiable types \citep{Jones95}, on which his account of functional dependencies is based, we would conclude that the types !Bool -> Bool! and !F Int t => t -> t! have the same satisfiable instances, and so ought to be considered equivalent. Nevertheless, neither Hugs nor GHC has ever accepted definitions like this one. In this case of Hugs, this could be considered simply a bug. For GHC, the problem is more foundational: there is no well-typed \SysFC translation of !f!. In \SysFD, however, we do have such a translation, shown on the right of \cref{fig:fundeps-polymorphic}. By applying the !fdFwd! open function to both the known !F Int Bool! instance and the provided !F Int t! instance, we get a coercion !h : Bool ~ t!. This coercion is exactly what we need to cast !not! from !Bool -> Bool! to !t -> t!.

\begin{figure}[ht]
\begin{minipage}[t]{0.33\textwidth}
\begin{codef}
f :: forall t. F Int t => t -> t
f = /\ t. \ d. not
\end{codef}
\end{minipage}
\begin{minipage}[t]{0.65\textwidth}
\begin{codef}
f :: forall t. F Int t -> t -> t
f = /\ t. \ d.
  let h = fdFwd[Int][Bool][t] (FIB[Int][Bool] $\Refl{\mathtt{Int}}$ $\Refl{\mathtt{Bool}}$) d
  in not |> $\Refl\to$ @ h @ h
\end{codef}
\end{minipage}
\caption{\SysFD translation of polymorphic functions with functional dependencies}
\label{fig:fundeps-polymorphic}
\end{figure}

\subsubsection*{Functional dependencies and type families.}

In modern Haskell, type families are preferred to functional dependencies, as they are better supported and more expressive.
We have focused on functional dependencies because our translation fixes an expressiveness gap in other implementations of functional dependencies.
Moreover, our approach to translating functional dependencies applies directly to open and injective type families, with (nearly) identical implementations.
Doing so would realize Constrained Type Families \citep{MorrisE17}, a proposal which repairs some of the surprising features of type families while preserving their expressiveness.
We consider closed type families, and corresponding features in type classes \citep{MorrisJ10,MorrisE17}, an interesting future application of \SysFD.
Even for these cases, we expect their translation to follow the pattern laid out here; the interest would lie primarily in the metatheoretic guarantees.

\subsection{Expressions}

Finally, we describe the translation of expressions.
The surface syntax is almost a subset of \SysFD, and hence the majority of the translation follows directly by structural recursion.
Annotation syntax propagates a type but otherwise does not participate in the translation, nor persist into the translated code.
To handle holes the translation searches over the open functions and their possible instance arguments, identically to instance resolution in Haskell.
This search is non-terminating without some measure on the evidence.
Our examples meet the Paterson Conditions \citep{SulzmannDPJS07}, and so instance selection is guaranteed to terminate.
A hidden requirement of the translation is synthesizing coercions, both for the definitions of instances and to capture the use of functional dependencies in typing terms.
To synthesize coercions, we construct a graph from the equalities in scope, and conduct depth-first search over this graph.
Coercion synthesis may rely on term synthesis to generate pairwise improvements, if there are predicates with functional dependencies in scope.

We would hope that all well-typed terms of the surface language translate to well-typed terms of \SysFD.
As our translation is implemented in Lean, we might even hope to mechanize this result.
However, there are several challenges.
Most significantly, instance selection itself only terminates given non-trivial conditions on surface language programs.
In terms of our mechanization, this means that our translation must be defined as a partial function, and so cannot participate in any non-trivial theorem statements.
Because \SysFD type inference is decidable, however, we are able to verify that the individual outputs of our translator are all well-typed.

\subsection{Open Function Saturation}
\label{sec:instance_saturation}

Just knowing that our translation produces well-typed \SysFD terms may not be enough.
Recall that $\Zero$ inhabits all \SysFD types, so any term could be translated to $\Zero$.
Moreover, $\Zero$ can arise from an invocation of an open function that does not match any of its cases.
Of course, this observation is not a threat to type safety.
Just as divergence, $\Zero$ should be seen as a bottom element of every type; unlike divergence, however, $\Zero$ is detectable and informative.
To assure that our translation is meaningful, then, we should guarantee that the translation of terms is not just well-typed, but will never reduce to $\Zero$.
To do so, we must guarantee that we do not generate a $\Zero$ directly, and that all open functions are defined for all possible well-typed arguments.
We call the latter condition ``saturation'', and will argue that our translation always produces saturated definitions after we have introduced the necessary metatheory \cref{sec:specialization}.

\section{Metatheory}
\label{sec:metatheory}

Standard metatheoretic properties of interest, like progress and preservation, are shown for \SysFD.
Moreover, all of the metatheory is mechanized in a Lean4 development.
Throughout the discussion of metatheory there are mechanization notes that discuss differences between the presentation meant for human consumption and the mechanization.
References to relevant sections of the mechanizations are provided for the stated theorems.

\begin{figure}
  \begin{smalle}
  \[
  \begin{array}{l@{\hspace{5px}}l@{\qquad}l@{\hspace{5px}}l@{\qquad}l@{\hspace{5px}}l@{\qquad}l@{\hspace{5px}}l@{\qquad}l@{\hspace{5px}}l}
    \text{Term variables} & x, y & \text{Kinds} & \kappa  & \text{Terms} & N, M, P, \eta \\
    \text{Type variables} & t, u & \text{Types} & \sigma, \tau
  \end{array}
  \]
  \begin{doublesyntax}
    \text{Contexts} & \mcr{A} & ::= & \square \mid A \Ap N \mid A \TAp \sigma \mid \Cast M A \mid \Case A P M N \mid \Guard A P N  \\
    & & & \mid & \Sym A \mid \Trans A \eta \mid \Trans \eta A \mid \CAp A {\eta} \mid \CAp \eta A \mid \Fst A \mid \Snd A \mid \Univ t \kappa A \mid \CInst A \tau \\
    & & & \mid & \SimC A \eta \mid \SimC \eta A \\
    & \mcr{E} & ::= & \cdots \mid \Choice E M \mid \Choice M E \\
    \text{Term values} & \mcr{V} & ::= & K \mid V \Ap N \mid \LamS x \sigma M \mid V \TAp \sigma \mid \TLam t \kappa M \mid \Refl \tau \mid V_1 \oplus V_2 \\
    \text{Type values} & \mcr{\nu} & ::= & T \mid \nu_1 \Ap \nu_2 \mid \ForallS t \kappa \sigma \mid \Eq {\nu_1} {\nu_2} \kappa
  \end{doublesyntax}
  \end{smalle}
  \caption{Evaluation syntax}
  \label{fig:eval-syntax}
\end{figure}

\begin{figure}
  \begin{smalle}
  \[
    \fbox{$\Gamma \vdash M \leadsto M$}
  \]
  \begin{minipage}[t]{0.45\linewidth}
  \begin{align}
  \Gamma \vdash (\LamS x \sigma M) \Ap N &\leadsto {\Subst x N M}
  \label[rule]{rule:beta}
  \tag{$\beta_\to$}
  \\
  \Gamma \vdash (\TLam t \kappa M) \TAp \sigma &\leadsto {\Subst x \sigma M}
  \label[rule]{rule:beta-type}
  \tag{$\beta_\forall$}
  \\
  \Gamma \vdash \Sym \Refl \tau &\leadsto \Refl \tau
  \label[rule]{rule:refl}
  \tag{$\delta_{\mathsf{refl}}$}
  \\
  \Gamma \vdash \Trans {\Refl \tau} {\Refl \tau} &\leadsto \Refl \tau
  \label[rule]{rule:trans}
  \tag{$\delta_\fatsemi$}
  \\
  \Gamma \vdash \CAp {\Refl \tau} {\Refl \nu} &\leadsto \Refl {\tau \Ap \nu}
  \label[rule]{rule:app}
  \tag{$\delta_@$}
  \\
  \Gamma \vdash \CInst {\Refl {\ForallS t \kappa \tau}} {\nu} &\leadsto \Refl {\Subst x \nu \tau}
  \label[rule]{rule:inst}
  \tag{$\delta_{@[]}$}
  \\
  \Gamma \vdash \Fst {\Refl {\tau \Ap \nu}} &\leadsto \Refl \tau
  \label[rule]{rule:fst}
  \tag{$\delta_{\mathsf{fst}}$}
  \end{align}
  \end{minipage}
  \hspace{0.05\linewidth}
  \begin{minipage}[t]{0.45\linewidth}
  \begin{align}
  \Gamma \vdash \Snd {\Refl {\tau \Ap \nu}} &\leadsto \Refl \nu
  \label[rule]{rule:snd}
  \tag{$\delta_{\mathsf{snd}}$}
  \\
  \Gamma \vdash \SimC {\Refl \tau} {\Refl \tau} &\leadsto \Refl \tau
  \label[rule]{rule:sim}
  \tag{$\delta_\sim$}
  \\
  \Gamma \vdash \Univ t \kappa {\Refl \tau} &\leadsto \Refl {\ForallS t \kappa \tau}
  \label[rule]{rule:forall}
  \tag{$\delta_\forall$}
  \\
  \Gamma \vdash \Cast M \Refl \tau &\leadsto M
  \label[rule]{rule:cast}
  \tag{$\delta_\triangleright$}
  \\
  \Gamma \vdash \Choice \Zero M &\leadsto M
  \label[rule]{rule:zero-l}
  \tag{$\beta_{\Zero\text-1}$}
  \\
  \Gamma \vdash \Choice M \Zero &\leadsto M
  \label[rule]{rule:zero-r}
  \tag{$\beta_{\Zero\text-2}$}
  \\
  \Gamma \vdash A[\Zero] &\leadsto \Zero
  \label[rule]{rule:absorb}
  \tag{$\zeta$}
  \end{align}
  \end{minipage}
  \begin{align}
  \Gamma \vdash \Case {K \Ap {\overline{A}} \Ap {\overline{B}}} {K \Ap {\overline{A}}} M N &\leadsto M \Ap \overline{B}
  \label[rule]{rule:if-hit}
  \tag{$\delta_{\mathsf{if}\text{-}1}$}
  \\
  \Gamma \vdash \Case {K_1 \Ap {\overline{A_1}} \Ap {\overline{B}}} {K_2 \Ap {\overline{A_2}}} M N &\leadsto N
    &&\text{if $K_1 \neq K_2$ or $A_1\neq A_2$}
  \label[rule]{rule:if-miss}
  \tag{$\delta_{\mathsf{if}\text{-}2}$}
  \\
  \Gamma \vdash \Guard {K \Ap {\overline{A}} \Ap {\overline{B}}} {K \Ap {\overline{A}}} M &\leadsto M \Ap \overline{B}
  \label[rule]{rule:guard-hit}
  \tag{$\delta_{\mathsf{guard}\text-1}$}
  \\
  \Gamma \vdash \Guard {K_1 \Ap {\overline{A_1}} \Ap {\overline{B}}} {K_2 \Ap {\overline{A_2}}} M &\leadsto \Zero
    &&\text{if $K_1 \neq K_2$ or $A_1 \neq A_2$}
  \label[rule]{rule:guard-miss}
  \tag{$\delta_{\mathsf{guard}\text-2}$}
  \\
  \Gamma \vdash x &\leadsto \Choice \Zero {\Choice {M_1} {\Choice \ldots {M_k}}}
    &&\text{for all $\InstanceDef x {M_i} \in \Gamma$}
  \label[rule]{rule:open}
  \tag{$\beta_{\mathsf{open}}$}
  \\
  \Gamma \vdash x&\leadsto M
    &&\text{if $\LetDef x M \in \Gamma$}
  \label[rule]{rule:let}
  \tag{$\beta_{\mathsf{let}}$}
  \\
  \Gamma \vdash A[\Choice M N] &\leadsto \Choice {A[M]} {A[N]}
  \label[rule]{rule:map}
  \tag{$\kappa$}
  \\
  \Gamma \vdash E[M] &\leadsto E[N]
    &&\text{if $\Gamma \vdash M \leadsto N$}
  \label[rule]{rule:cong}
  \tag{$\xi$}
  \end{align}
  \end{smalle}
  \caption{Small-step reduction relation}
  \label{fig:reduction}
\end{figure}

\subsection{Reduction and Values}

\subsubsection*{Evaluation contexts and values.}

First, notions of (lazy) reduction and values (in weak head normal form) are required.
\cref{fig:eval-syntax} describes evaluation contexts and values.
A \emph{value} is a constant, the head of an anonymous function ($\LamS x \sigma M$ and $\TLam t \kappa M$), a reflexivity proof ($\Refl \tau$), an application spine where the head is a constant ($K \Ap \overline{A}$), or a choice operator between values ($V_1 \oplus V_2$).
As expected for a definition of values it will be the case that they do not reduce.

\subsubsection*{Evaluation}

Figure~\ref{fig:reduction} describes the full small-step reduction relation.
Reduction is defined in terms of an environment $\Gamma$, which will provide definitions of open functions and \texttt{let}-bound variables.
A single term $M$ can reduce to multiple possible choices.
For example, the reduction rule for instance instantiation of an open variable reduces to a term representing all possible instance choices.

The anonymous function $(\beta)$ rules are entirely standard.
As the only value of coercion type is $\Refl\tau$ (or rather a potential tree of reflexivity proofs), the reduction rules for coercions $(\delta)$ manipulate only the coercion's type annotation.
The reduction rule for casts $(\delta_\triangleright$) eliminates fully-reduced coercions.
The reduction rules for conditionals and guards $(\delta_{\mathsf{guard}})$ check whether the scrutinee has the same term constant and a shared prefix application spine.
A guard that does not match ``vanishes'' from evaluation; we represent this as the monoid identity element of choice.
Instance invocations $(\beta_{\mathsf{open}})$ evaluate to all the definitions of the instance in scope folded by the monoid formed using $\Zero$ and $\oplus$; in contrast, traditional !let!-bound variables have a single definition.

Evaluation contexts are defined in Figure~\ref{fig:eval-syntax} for lazy reduction.
There are two kinds of evaluation context an ``absorptive'' context (represented by $A$) and a full evaluation context (represented by $E$).
The notation $E[M]$ means substituting the hole in the evaluation context (represented by $\square$) with the associated term $M$.
The absorptive context is used to capture how reduction interacts when the choice operators are not part of the ambient context.
The absorption rule $(\zeta)$ defines how the identity element $\Zero$ absorbs surrounding syntax.
The mapping rule $(\kappa)$ defines how absorption contexts map over the choice operator.
We have a standard congruence rule $(\xi)$ for evaluation contexts.

\subsubsection*{Mechanization note.}

We have presented our syntax (\cref{fig:syntax,fig:eval-syntax}) and rules (\cref{fig:env-kind-typequ,fig:typing,fig:typing-coercions,fig:typing-declarations}) split across the syntactic categories of kinds, types, and terms.
To work around limitations in Lean4's support for mutually inductive data types, our mechanization merges syntactic categories in one inductive type, and merges the typing judgments in another.
This also means that our mechanization has a single notion of value, across types and terms.

\subsection{Properties of Typing}
As promised, the first important theorem is that values are sound.
That is, if a well-typed term is a value then it does not reduce.
This means the characterization of values is correct with respect to reduction.
Moreover, it also means that a value is not $\Zero$, but it does allow for $\Zero$ to appear as a subexpression inside of a value.

\begin{theorem}[Soundness of Values\protect\footnote{\protect\url{SystemFD/Metatheory/Progress.lean:val_sound}}]
  If $\TypeJ \Gamma V \tau$ then $\not \exists t^\prime.\ \Gamma \vdash V \leadsto t^\prime$ and $V \neq \Zero$
\end{theorem}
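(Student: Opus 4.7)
The second clause, $V \neq \Zero$, is immediate by inspection of the value grammar in \cref{fig:eval-syntax}: $\Zero$ is not among its productions.  For the non-reduction clause, I would proceed by structural induction on $V$, inverting against each reduction rule of \cref{fig:reduction} in turn.

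The leaf cases $V \in \{\,K,\; \LamS x \sigma M,\; \TLam t \kappa M,\; \Refl\tau\,\}$ are immediate: no left-hand side of any reduction rule matches these bare forms.  In particular, $\beta_{\mathsf{open}}$ and $\beta_{\mathsf{let}}$ fire on a term \emph{variable} $x$ rather than a term constant $K$, and no $\delta$-rule rewrites a bare reflexivity proof (the $\delta$-rules for coercions all require $\Refl\tau$ to sit under a $\Sym$, $\Trans$, $\CAp{}{}$, $\CInst{}{}$, $\Fst{}$, $\Snd{}$, $\SimC{}{}$, or $\Univ{t}{\kappa}{}$).  For $V = V_1 \oplus V_2$, the inductive hypothesis on the two sub-values rules out the congruence rule $\xi$ (neither side reduces) and the two $\beta_{\Zero}$ rules (neither side is $\Zero$).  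Moreover, $\oplus$ is not itself a production of the absorptive context $A$ in \cref{fig:eval-syntax}, so the absorption rule $\zeta$ and the mapping rule $\kappa$ also do not apply at the root.

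The application cases $V = V' \Ap N$ and $V = V' \TAp \sigma$ are the crux of the proof.  The inductive hypothesis applied to $V'$ rules out the congruence rule $\xi$, and $\zeta$, $\kappa$ are inapplicable since no $\Zero$ or $\oplus$ appears at the root.  The only remaining threat is a $\beta_\to$ (respectively, $\beta_\forall$) redex, which would require $V'$ to be an abstraction.  Ruling this out is where I expect the real work: the prose description of \cref{fig:eval-syntax} explicitly characterises an application as a \emph{spine} whose head is eventually a constant $K$, and the formal argument must recover this intent.  I would discharge this by a small auxiliary inversion lemma stating that if $V$ is a value and $V \Ap N$ (resp.\ $V \TAp \sigma$) is also a value, then $V$ is of the form $K \Ap \overline{A}$, never an abstraction or a reflexivity.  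The lemma follows by induction on the value derivation together with inversion on the typing $\TypeJ \Gamma V \tau$ to eliminate the ill-typed shapes.  Once it is in place, the application cases close immediately, completing the induction.

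Overall, the proof is a routine case split against the reduction relation; the only subtle point is the application-spine case, where the informal reading of the value grammar must be formalised via an auxiliary inversion lemma, most naturally stated (and, in the mechanization, proved) alongside the definition of values itself.
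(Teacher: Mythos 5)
Your overall strategy---structural case analysis on the value form against each reduction rule, with the application case isolated as the crux---matches the structure the mechanized proof must have (the paper gives no prose proof of this theorem, only the pointer to \texttt{Progress.lean}, so I am comparing against the intended argument). The $V \neq \Zero$ clause, the leaf cases, and the $\Choice{V_1}{V_2}$ case are handled correctly.

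The soft spot is your discharge of the crux. You propose the right auxiliary lemma---if $V$ and $V \Ap N$ are both values then $V$ is a constant-headed spine $K \Ap \overline{A}$---but claim it ``follows by induction on the value derivation together with inversion on the typing to eliminate the ill-typed shapes.'' Typing does not eliminate the two head shapes that actually matter. A $\beta$-redex $(\LamS x \sigma M) \Ap N$ is perfectly well-typed, and under the literal grammar of \cref{fig:eval-syntax} ($V ::= \cdots \mid V \Ap N$, with $\LamS x \sigma M$ itself a value) it is also a value; yet it reduces by $\beta_\to$. Likewise $(\Choice{V_1}{V_2}) \Ap N$ is well-typed and reduces---not by $\beta_\to$ but by the mapping rule $\kappa$ with $A = \square \Ap N$---a case your analysis never mentions (you exclude only ``an abstraction or a reflexivity'' at the head). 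So the spine restriction cannot be \emph{derived} from typing plus the displayed grammar; it has to be built into the definition of the value predicate itself, as the prose (``an application spine where the head is a constant'') indicates and as the mechanization must do. You gesture at this in your closing sentence, but as written the justification of the key lemma is the one step that would not go through; once the value predicate is taken to restrict application heads to constant-headed spines by definition, the rest of your argument closes.
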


Additionally, it is simple to show that types are always values.
This is perhaps unsurprising as there is no reduction in types.

\begin{theorem}[Types are Values\protect\footnote{\protect\url{SystemFD/Metatheory/Progress.lean:types_are_values}}]
  If $\Gamma \vdash \tau : \kappa$, then $\tau$ is a value
\end{theorem}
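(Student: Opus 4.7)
The plan is to proceed by induction on the kinding derivation $\KindJ \Gamma \tau \kappa$. The observation that drives the proof is that, except for the type-variable rule, every production in the type grammar of Figure~\ref{fig:syntax} is also a production in the type-value grammar $\nu$ of Figure~\ref{fig:eval-syntax}. Hence each kinding rule hands us, in its conclusion, a type whose head constructor already matches one of the value forms, so the value judgment and the kinding judgment coincide on types.

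Concretely: the case for a type constant $T$ yields a value directly, and the case for $\ForallS t \kappa \sigma$ yields a value without needing the induction hypothesis (the body $\sigma$ is not required to be a value). The application case $\tau \Ap \upsilon$ uses the induction hypothesis on both premises to conclude that $\tau$ and $\upsilon$ are values, so $\tau \Ap \upsilon$ matches the $\nu_1 \Ap \nu_2$ form. The coercion-type case $\Eq \tau \upsilon \kappa$ is analogous: both components become values by induction, and $\Eq {\nu_1} {\nu_2} \kappa$ is in the grammar.

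The only subtlety is the type-variable rule: a bare $t$ is well-kinded but is not explicitly listed in the type-value grammar. However, no reduction rule in Figure~\ref{fig:reduction} fires on a type variable (the rules $\beta_{\mathsf{open}}$ and $\beta_{\mathsf{let}}$ target term variables bound by $\InstanceDef{x}{M}$ or $\LetDef{x}{M}$, not type variables), so $t$ is stuck and ought to be treated as a value. As the mechanization note indicates, the unified Lean4 representation of syntax collapses types and terms into a single inductive, and its uniform notion of value is formulated in a way that accommodates the variable-at-type case automatically. There is no genuine obstacle in the proof: the whole argument is a case-by-case match between grammars, and the only point requiring care is confirming that the mechanized value predicate covers this last case so that the induction goes through without a side condition restricting $\Gamma$.
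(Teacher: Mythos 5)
Your proof is correct and takes essentially the approach the paper intends: the paper offers no written proof beyond the remark that ``there is no reduction in types,'' and the structural induction over the kinding derivation that you describe, matching each kinding rule's conclusion against the type-value grammar $\nu$, is exactly how that remark is discharged. Your identification of the type-variable case as the one point of friction is apt, and your resolution (a bare $t$ is irreducible, and the mechanization's unified value predicate admits it) is consistent with the paper's own justification, so I see no gap.
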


Most similar systems without subtyping have the property that types are unique.
With the existence of $\Zero$ as a formal syntactic bottom element this property is trivially refuted in general, but there are a few instances of uniqueness of types that are still obtainable.
The most readily available is uniqueness of types in the absence of $\Zero$ as a subexpression.

\begin{theorem}[Uniqueness of Types Modulo $\Zero$\protect\footnote{\protect\url{SystemFD/Metatheory/Uniqueness.lean:uniqueness_modulo_zero}}]
  If $\Gamma \vdash M : \tau$, $\Gamma \vdash M : \sigma$, and $0$ is not a subexpression of $M$ then $\tau = \sigma$
\end{theorem}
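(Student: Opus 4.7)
The plan is to proceed by induction on the structure of $M$ (equivalently, on the derivation $\TypeJ \Gamma M \tau$), exploiting the fact that the typing rules of \cref{fig:typing,fig:typing-coercions} are syntax-directed for every term form \emph{except} $\Zero$, which is excluded by hypothesis. For each non-$\Zero$ term former, inversion on both $\TypeJ \Gamma M \tau$ and $\TypeJ \Gamma M \sigma$ selects the same rule (the term constructors are pairwise distinct), and the hypothesis that $\Zero$ is not a subexpression is inherited by every subterm, so the induction hypothesis applies to each premise.

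A prerequisite lemma is uniqueness of kinding: if $\KindJ \Gamma \tau \kappa$ and $\KindJ \Gamma \tau {\kappa'}$ then $\kappa = \kappa'$. This follows by a routine induction on the kinding derivation, since no rule in \cref{fig:env-kind-typequ} is ambiguous at the type level and no analogue of $\Zero$ exists there. Kinding uniqueness is needed in the coercion cases whose result type carries a kind annotation drawn from a \emph{kinding} premise rather than from a sub-coercion---namely $\Refl$, $\Fst$, $\Snd$, and the instantiation coercion $\CInst$.

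For the term cases the arguments are uniform. Variable, constant, and open-variable lookups are deterministic in $\Gamma$. The abstractions $\LamS x \sigma M$ and $\TLam t \kappa M$ have annotated binders, so the induction hypothesis on the body pins the overall type. Applications $M \Ap N$, type applications $M \TAp \tau$, and casts $\Cast M \eta$ read their result off a subterm's unique type; syntactic injectivity of $\to$, $\forall$, and $\sim$ ensures the decomposition is unambiguous, and substitution handles the $\TAp$ and $\CInst$ cases. Both branching forms $\Case$ and $\Guard$ inherit $\upsilon$ from the type of the consequent, which is unique by the induction hypothesis. For $\Choice M N$ the induction hypothesis on $M$ alone already fixes the common type.

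The main delicate step will be the $\Fst$ and $\Snd$ coercion rules, whose premise gives $\eta$ the type $\Eq {\tau \Ap \upsilon} {\tau' \Ap \upsilon'} {\kappa'}$; we must recover the split $(\tau, \upsilon)$ uniquely, which follows from injectivity of the type-application syntax combined with the unique type of $\eta$ obtained by the induction hypothesis, while the output kind is pinned down by kinding uniqueness. The remaining coercion forms ($\Sym$, $\Trans$, $\CAp$, $\Univ$, $\SimC$) are completely determined by the types of their sub-coercions plus, where relevant, uniqueness of kinding, and so pose no additional difficulty. The only potentially subtle bookkeeping is ensuring that $\Zero$ has been excluded from coercion subterms as well---but since coercions are a syntactic subset of terms in \SysFD, the ``no $\Zero$ subexpression'' hypothesis already covers them.
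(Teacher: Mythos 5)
Your proposal is correct and follows essentially the same route as the paper's proof, which is given only as a Lean mechanization performing the same structural induction over the syntax-directed typing rules together with an auxiliary uniqueness-of-kinding lemma. The one spot worth tightening is the $\mathsf{guard}$ case, where the result type $\upsilon$ must be carved out of the consequent's type $\forall \overline{t:\kappa}.\, \overline\tau \to \upsilon$ using the arity fixed by the (unique) type of the pattern; this is the same decomposition argument you already invoke for applications, so it poses no real difficulty.
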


Additionally, if a term has a neutral form (i.e., is an application spine with a variable or a constant at the head) then it also has a unique type.
Using the prior theorem makes this clear: $\Zero$ is not a subexpression of a variable.
Meaning, the function type of the head is unique and fully determines the return type, regardless of the status of the arguments.
Note we state the theorem below relative to term applications, but the theorem holds more generally relative to arbitrary applications (both intermixed term and type).
These two forms of uniqueness are sufficient to prove the remaining metatheoretic theorems of interest.

\begin{theorem}[Uniqueness of Types With Neutral Form\protect\footnote{\protect\url{SystemFD/Metatheory/Uniqueness.lean:uniqueness_modulo_neutral_form}}]
  If $\Gamma \vdash M : \tau$, $\Gamma \vdash M : \sigma$, and $M = x\ M_1\ \cdots\ M_n$ then $\tau = \sigma$
\end{theorem}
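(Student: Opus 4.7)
The plan is to apply the preceding Uniqueness of Types Modulo $\Zero$ directly to the head variable and then induct along the spine. Since $M = x\ M_1\ \cdots\ M_n$, the head $x$ contains no occurrence of $\Zero$, so Uniqueness of Types Modulo $\Zero$ forces a single type $\sigma_x$ such that every derivation of a judgment for $x$ in $\Gamma$ concludes $\Gamma \vdash x : \sigma_x$. Intuitively, this pins down the head, and each successive argument peels off one arrow, leaving no freedom in the final result type.

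I then induct on $n$. For $n = 0$, we have $M = x$ and the conclusion is immediate from the uniqueness of $\sigma_x$. For $n > 0$, write $M = M'\ M_n$ where $M' = x\ M_1\ \cdots\ M_{n-1}$ is itself a spine headed by $x$. Inverting the application typing rule on each of $\Gamma \vdash M : \tau$ and $\Gamma \vdash M : \sigma$ yields types $\tau'$ and $\sigma'$ with $\Gamma \vdash M' : \tau' \to \tau$ and $\Gamma \vdash M' : \sigma' \to \sigma$. The induction hypothesis applied to $M'$ gives $\tau' \to \tau = \sigma' \to \sigma$, and syntactic injectivity of the arrow constructor yields $\tau = \sigma$.

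I expect no significant obstacle: inversion on term application is deterministic because there is a single typing rule for it in \cref{fig:typing}, and equality of types carries no non-trivial equational content beyond $\alpha$-equivalence, which the de Bruijn representation of the mechanization eliminates. The only care to take is in the generalization advertised after the theorem, where the spine may include type applications; there one additionally inverts $M' \TAp \upsilon$, reading off one $\forall$-layer of the head's type per step, and appealing to injectivity of the quantifier (again syntactic) in the same way.
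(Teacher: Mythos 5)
Your proposal is correct and matches the paper's own (informal) argument: apply Uniqueness of Types Modulo $\Zero$ to the head variable, which contains no $\Zero$, and then observe that the head's function type determines the result type arrow by arrow along the spine, with your explicit induction and inversion on the application rule just spelling out what the paper leaves implicit. Your closing remark about extending the argument to type applications likewise corresponds to the paper's note that the theorem holds for intermixed term and type applications.
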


Previously we claimed the only values at coercion type are trees of reflexivity proofs.
To make this clear, a value at coercion type can be one of two things: either $\Refl \tau$ or a choice between two values, $V_1 \oplus V_2$.
However, the choice operator maintains the same type in both the left and right possibilities, meaning both $V_i$ are values are coercion type.
By induction the $V_i$'s must both be a tree of reflexivity proofs.
Note that this works for any well-typed term $\eta$ in any context $\Gamma$ because being a value precludes a neutral form.
Meaning, it is not possible for $\eta$ to simultaneously be a value and to be a lambda bound variable.

\begin{theorem}[Canonicity at Coercion Type\protect\footnote{\protect\url{SystemFD/Metatheory/Canonicity.lean:refl_is_val}}]
  \label{thm:canonicity}
  If $\TypeJ \Gamma V : \Eq \sigma \tau \kappa$, then $V = \Refl \tau$ or there exists terms $M_1$ and  $M_2$ such that $V = M_1 \oplus M_2$
\end{theorem}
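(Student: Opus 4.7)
The plan is to proceed by case analysis on the seven value forms from the evaluation-syntax grammar: $K$, $V' \Ap N$, $\LamS x \sigma' M$, $V' \TAp \sigma'$, $\TLam t \kappa' M$, $\Refl \tau'$, and $V_1 \oplus V_2$. Two of these will directly match a disjunct of the conclusion, and the remaining five I expect to rule out by contradicting the hypothesis $\TypeJ \Gamma V {\Eq \sigma \tau \kappa}$. The direct cases are immediate: if $V = V_1 \oplus V_2$ we are already in the second disjunct, and if $V = \Refl \tau'$ the coercion typing rule gives $V : \Eq {\tau'} {\tau'} \kappa$; since $\Refl \tau'$ contains no $\Zero$ subterm, uniqueness of types modulo $\Zero$ then forces $\tau' = \sigma = \tau$, placing $V$ in the first disjunct.

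For the five remaining cases I would derive type mismatches by inversion on typing. For $V = K$, the environment pins $K$'s type to some $\forall \overline{u : \kappa'}. \overline{\tau'} \to \upsilon$ with $\upsilon$ a (closed or open) data-type application, which is never syntactically of the form $\Eq \sigma \tau \kappa$. For $V = \LamS x \sigma' M$ and $V = \TLam t \kappa' M$, a single typing rule applies in each case, forcing a function type $\sigma' \to \tau'$ or a universal type $\ForallS t \kappa' \tau'$ respectively, neither of which is an equality type. The application-spine cases $V = V' \Ap N$ and $V = V' \TAp \sigma'$ are the subtlest and I would handle them by an auxiliary induction on spine length. Tracing through the spine, the head must itself be a value, and the value grammar combined with soundness of values rules out every head other than a constant: $\lambda$ or $\Lambda$ heads would trigger the $\beta_\to$ or $\beta_\forall$ reductions; a $\Refl$ head has coercion type rather than a function or universal type, so the inner application would be ill-typed; a choice head $V_1 \oplus V_2$ would be distributed outward by the $\kappa$ rule; and bare variables are not in the value grammar. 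Once the head is pinned to a constant $K$, repeated inversion using $K$'s declared type shows that the full spine's type is an instantiation of that type, whose codomain is again a data-type application rather than an equality.

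The main obstacle is the application-spine case, since it is the only branch that requires more than a single inversion step. The argument rests on combining three ingredients: the value grammar, soundness of values (to rule out heads from which a reduction would fire), and the specific reduction rules ($\beta_\to$, $\beta_\forall$, and $\kappa$) that would match each forbidden head. Once the head is known to be a constant, the type mismatch follows straightforwardly from the typing of constants in the environment, with the uniqueness theorems cited in the excerpt providing the formal framing that closes out the contradiction.
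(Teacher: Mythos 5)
Your proposal is correct and takes essentially the same approach as the paper, which likewise argues by exhaustive case analysis on value forms: the reflexivity and choice ($\oplus$) cases satisfy the conclusion directly, and every other form (variables being excluded by the value grammar itself) is dismissed by typing inversion. Your added detail on constant-headed application spines fills in what the paper delegates to the mechanization and is consistent with its prose definition of values, which already restricts spine heads to constants.
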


An immediate corollary to this theorem is that both $M_i$'s must be values by inspection of the evidence that $V$ is a value.
Additionally, we can prove a similar canonicity theorem for function types.
Unlike with coercions, a function type has three possibilities: a lambda binder, a choice of function type values, and a constructor constant.
Just like with coercion types, the potential tree of possibilities consists of leafs that are values.

\begin{theorem}[Canonicity at Function Type\protect\footnote{\protect\url{SystemFD/Metatheory/Canonicity.lean:canonical_lambda}}]
  If $\TypeJ \Gamma V : \sigma \to \tau $, then $V = \LamS x \tau M$, or $V = K$ for a constructor term constant $K$, or there exists terms $M_1$ and  $M_2$ such that $V = M_1 \oplus M_2$
\end{theorem}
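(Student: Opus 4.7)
The plan is to proceed by case analysis on the evidence that $V$ is a value (following the grammar in \cref{fig:eval-syntax}), combined with inversion on the typing derivation $\TypeJ \Gamma V {\sigma \to \tau}$. We eliminate the forms incompatible with a function type and check the remaining cases against the three disjuncts of the conclusion.

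First I would dispatch the impossible cases. For $V = \TLam t \kappa M$, inversion forces the type to be $\ForallS t \kappa {\sigma'}$, which cannot coincide with $\sigma \to \tau$; uniqueness of types modulo $\Zero$ applies here, since $V$ contains no $\Zero$ subterm. For $V = \Refl {\tau'}$, inversion forces an equality type, again a contradiction. The remaining \emph{compatible} leaf cases directly match the conclusion: $V = \LamS x {\sigma'} M$ witnesses the first disjunct, $V = K$ witnesses the second (noting that the only term constants admitted by environment formation are those introduced by $\CtorDecl K {\sigma'}$ or $\OpenCtorDecl K {\sigma'}$, i.e.\ constructor constants), and $V = V_1 \oplus V_2$ witnesses the third with $M_i := V_i$, where inversion on the typing of $\oplus$ additionally propagates the function type to both branches.

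The main obstacle is the spine cases $V = V' \Ap N$ and $V = V' \TAp {\sigma'}$. These are syntactically values and can certainly inhabit a function type (for instance, a partially-applied constructor $\mathsf{Cons} \TAp \mathsf{Int}$), yet they do not literally match any of the three disjuncts. The handling requires an inner induction (or well-founded recursion) on the size of $V'$: $V'$ is itself a value whose type, when fully applied along the spine, produces $\sigma \to \tau$; by inversion its head cannot be a $\LamS{}{}{}$ or $\TLam{}{}{}$ (else the spine would reduce, contradicting Soundness of Values) nor a $\Refl{}$ (wrong type), leaving a constant head $K$ or a choice $V_1' \oplus V_2'$ which can be lifted outward by the appropriate evaluation-context reasoning.

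Under the natural reading of the theorem, these spine cases collapse into the ``$V = K$'' disjunct as ``$V$ is an application spine headed by a constructor constant $K$''; this matches the structural content of the statement but is the bookkeeping step that carries most of the weight of the mechanized proof. The argument closely parallels the canonicity theorem at coercion type (\cref{thm:canonicity}), where choice nodes are handled by recursion and leaves are classified by inverting the typing; the function-type version simply has a richer set of possible leaves (lambdas and constructor spines rather than only reflexivities).
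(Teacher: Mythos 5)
Your proof follows essentially the same route as the paper's (informal) argument: case analysis on the value forms, typing inversion to rule out $\TLam t \kappa M$ and $\Refl \tau$ (no appeal to uniqueness of types is actually needed there), and recursion through $\oplus$-nodes exactly as in the coercion-type canonicity proof. Your observation about partially-applied constructor spines is apt but does not change the strategy: the paper's prose definition of values already restricts application spines to constant heads ($K \Ap \overline{A}$), so the ``$V = K$'' disjunct is to be read as ``constructor-headed spine,'' and this is a presentational looseness in the statement rather than a gap in your argument.
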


\subsubsection*{Mechanization note.}

Because our mechanization merges syntactic classes, the theorems above are not as trivial as they might seem.
Critically, it is also the case that well-typed syntax may be reinterpreted into its distinct categories via a property called \emph{classification}.
Let $t, A$ range over the union of terms, types, kinds, and an additional symbol $\square$ introduced to classify kinds.
Let $-\vdash-:-$ be the union of the typing and kinding relations, along with a new axiom $\Gamma \vdash \kappa : \square$ for kinds $\kappa$.

\begin{theorem}[Classification\protect\footnote{\protect\url{SystemFD/Metatheory/Classification.lean}}]
  If $\Gamma \vdash t : A$ then $A$ satisfies one and exactly one of the following:
  \begin{enumerate}
    \item $A = \square$
    \item $\Gamma \vdash A : \square$
    \item $\Gamma \vdash A : \kappa$ for $\Gamma \vdash \kappa : \square$
  \end{enumerate}
\end{theorem}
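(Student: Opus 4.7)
The plan is to proceed by induction on the merged derivation $\Gamma \vdash t : A$, extended with the new axiom $\Gamma \vdash \kappa : \square$ for kinds. At each rule, I inspect the codomain $A$ to decide which of the three classification cases applies, invoking the induction hypothesis on any subderivation that witnesses the classification of a nested type or kind.

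The case analysis partitions the rules into three strata. The new kind axiom immediately supplies case (1). Every kinding rule of \cref{fig:env-kind-typequ}---the type-variable rule, type application, $\forall$-formation, equality-type formation, and the rule for data or open type constants---concludes with a kind in the codomain, so re-applying the kind axiom yields case (2). Every term-typing rule of \cref{fig:typing} and \cref{fig:typing-coercions} concludes at a type whose kindedness is either an explicit premise (as for $\lambda$-abstraction, $\Lambda$-abstraction, casts, and $\mathsf{Refl}$) or follows from the induction hypothesis on the preceding term-typing premise; in either case the kind axiom then lifts $\Gamma \vdash A : \kappa$ to a witness of case (3). A handful of rules---ordinary application, type application, $\mathsf{CInst}$, and the consequent branches of conditionals and guards---produce a substituted type, so I appeal to a routine substitution lemma stating that kinding is preserved under type (and term) substitution. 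Rules with binding premises additionally invoke standard weakening or strengthening to transport a kinding derivation from an extended environment back to $\Gamma$ when the extension does not appear in $A$.

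Exclusivity is the central difficulty and the point where I expect the most effort, especially in the mechanization, where terms, types, and kinds share a single inductive type. My plan is to strengthen the induction so that exclusivity is maintained as a joint invariant, relying on two inversion facts: $\square$ never appears as the subject of any typing rule, ruling out any overlap between case (1) and cases (2) or (3); and no rule concludes $\Gamma \vdash \kappa : \kappa'$ with both sides kinds, ruling out overlap between (2) and (3). Both facts follow by inspection of the rule set, once one observes that each rule populates only one stratum with its intended syntactic shape and that $\square$ is kept outside the object syntax. With those inversion facts in hand, the three alternatives are pairwise disjoint, and the rule-by-rule analysis above supplies the matching existence, completing the theorem.
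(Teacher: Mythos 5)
The paper gives no prose proof of this theorem---it only points to the Lean file---but your plan (induction on the merged derivation, regularity of conclusion types via explicit kinding premises or the induction hypothesis, substitution and weakening/strengthening lemmas for the application-like and binding rules, and disjointness of the three strata by inspection of which rules can have $\square$ or a kind as subject) is exactly the standard argument that such a mechanization must carry out, and each step you name is sound for the rules as given, with one exception noted below.

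The exception is the rule for $\Zero$. Your case analysis asserts that every term-typing rule ``concludes at a type whose kindedness is either an explicit premise \dots or follows from the induction hypothesis on the preceding term-typing premise,'' but the rule $\TypeJ \Gamma \Zero \sigma$ as printed in the paper has \emph{no} premises at all: $\sigma$ is completely unconstrained, so nothing forces $\KindJ \Gamma \sigma \Type$, and indeed nothing even prevents instantiating $\sigma$ with $\square$ or an ill-kinded type, which would falsify the theorem outright. Your induction therefore has no way to discharge this case. The fix is small but must be made explicit: either the $\Zero$ rule carries a kinding premise $\KindJ \Gamma \sigma \Type$ (as it presumably does in the mechanization), or the theorem needs a well-formedness side condition. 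You should flag this rule rather than fold it into the uniform claim, since it is the one place where regularity of conclusion types genuinely fails under the rules as written. The rest of the argument---including the exclusivity analysis via the two inversion facts---is correct as stated.
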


\noindent
Without this property the intrinsic presentation would not coincide with the extrinsic one on well-typed syntax.
Thus, it is critical this property holds in order to allow a presentation such as the one in Section~\ref{sec:system-fd}.

\subsection{Properties of Reduction}

A significant motivation for \SysFD is to guarantee it is well-behaved and moreover that there is high assurance in the proofs that it is well-behaved.
In this context, well-behaved means type soundness, or that reduction of terms does not get stuck.
The operational way of accomplishing this goal is proving both progress and preservation.

For \SysFD, \emph{progress} states that a well-typed term is either a value, 0, or that there is some term that it reduces to assuming the context does not contain bound variables introduced by an abstraction.
Typically, this requirement is communicated by requiring the context to be empty, but in \SysFD declarations, including open functions, are part of the context.
For that reason, the context needs to be retained and instead specifically $\lambda$-bound variables must be excluded.

\begin{definition}[$\lambda$-free context]
  Context $\Gamma$ is $\lambda$-free iff $x : \sigma \in \Gamma$ implies that $\LetDef x M \in \Gamma$.
\end{definition}

\noindent
Of course, there are other approaches to this problem, such as a mutual inductive definition of values and neutrals.
This approach is taken because of its simplicity and because it correlates with expectations of type soundness for non-strict languages.

\begin{theorem}[Progress\protect\footnote{\protect\url{SystemFD/Metatheory/Progress.lean:progress}}]
  Suppose $\Gamma$ is $\lambda$-free.
  If $\TypeJ \Gamma M \tau$ then either $M$ is a value, or $M = \Zero$, or there exists $M^\prime$ such that $\Gamma \vdash M \leadsto M^\prime$
\end{theorem}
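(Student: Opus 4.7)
The plan is to prove progress by induction on the typing derivation $\Gamma \vdash M : \tau$, splitting on the outermost term constructor of $M$ and leaning on the canonicity theorems for values just established. For the base cases, constants $K$ and reflexivity $\Refl \tau$ are values directly; $\Zero$ falls in the middle disjunct; a term variable $x$ with $x : \sigma \in \Gamma$ must, by $\lambda$-freeness, come with $\LetDef x N \in \Gamma$ and so steps by $\beta_{\mathsf{let}}$; an open variable $\OpenSig x \sigma \in \Gamma$ steps by $\beta_{\mathsf{open}}$ to the fold of its in-scope instance definitions. The abstractions $\LamS x \sigma M$ and $\TLam t \kappa M$ are values. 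A choice $\Choice M N$ requires two invocations of the induction hypothesis: if both components are values, the whole term is a value; if either is $\Zero$, then $\beta_{\Zero\text-1}$ or $\beta_{\Zero\text-2}$ fires; otherwise $\xi$ with the context $\Choice E M$ or $\Choice M E$ steps the non-value side.

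Every remaining case fits a common pattern built around absorptive contexts. Each elimination or coercion former — applications, type applications, casts, case and guard expressions, and the coercion formers $\Sym$, $\Trans$, $\CAp$, $\Fst$, $\Snd$, $\CInst$, $\Univ$, $\SimC$ — has one or more subterms that appear in hole positions of some absorptive context $A$. We apply the induction hypothesis to each such subterm in turn, lifting a reduction step via $\xi$, collapsing a $\Zero$ by $\zeta$, or distributing a value of the form $V_1 \oplus V_2$ through $A$ by $\kappa$. Once every principal subterm has been reduced to a value that is not a choice, canonicity forces exactly the shape required to fire the corresponding $\delta$- or $\beta$-rule: canonicity at coercion type reduces each coercion value to $\Refl \tau$ and so enables the coercion $\delta$-rules together with $\delta_\triangleright$; canonicity at function type enables $\beta_\to$; and the analogous canonicity result at a (closed or open) data type — a value at such a type is headed by a constructor with a given application spine — enables the $\mathsf{if}$ and $\mathsf{guard}$ rules, firing $\delta_{\mathsf{if}\text-1}$ or $\delta_{\mathsf{if}\text-2}$ (respectively $\delta_{\mathsf{guard}\text-1}$ or $\delta_{\mathsf{guard}\text-2}$) according to whether the constructor head and type-argument prefix agree with the pattern.

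The main obstacle is the bookkeeping around choice. Because $V_1 \oplus V_2$ is itself a value, no $\delta$-rule can consume it directly; a choice sitting in an eliminator position only makes progress because, for every eliminating syntactic slot recorded in the grammar of absorptive contexts $A$, the $\kappa$ rule pushes the context inside the choice. Verifying this forces a per-constructor check that every elimination or coercion form whose principal subterm might reduce to a choice is listed as an $A$-context — a routine but non-negligible check across the roughly ten coercion and eliminator formers. A related subtlety is that the canonicity lemmas must be extended beyond the two stated at coercion and function type to data types and to $\forall$-types (for the type-application case); these extensions go through by the same inductive inspection of value evidence, relying on uniqueness of types in neutral form to rule out stray shapes.
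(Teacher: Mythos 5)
Your proposal is correct and follows essentially the same route as the paper's (mechanized) proof: induction on typing, with $\lambda$-freeness discharging the variable cases via $\beta_{\mathsf{let}}$/$\beta_{\mathsf{open}}$, the absorptive-context rules $\zeta$ and $\kappa$ handling $\Zero$ and choice values in eliminator positions, and the canonicity lemmas (extended to data and $\forall$-types) pinning down the redex shapes --- precisely the lemma scaffolding the paper sets up around the theorem. The only nitpick is that in the application and type-application cases a constructor-headed value does not fire a rule but instead makes the whole spine a value, which your canonicity statement already accommodates.
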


Recall that \SysFD has two bottom elements: the standard bottom element of non-termination which is not syntactically expressible, and a detectable bottom element $\Zero$.
In reality, $\Zero$ could be considered a value without the metatheory changing in any significant way, but conceptually we believe it to be better to understand $\Zero$ as a bottom element instead of a value since it does inhabit all types.
However, note that the existence of $\Zero$ in no way challenges or effects the syntactic type safety of \SysFD.
An arbitrary coerce function can be constructed with $\Zero$ just as readily as it can be constructed with a recursive open function, there is no claim here that the type theory of \SysFD is a consistent logic.

Finally, the remaining theorem for proving type safety of \SysFD is preservation.
This theorem is stated and proved in an entirely standard way.
The combination of progress and preservation gives \SysFD the property of syntactic type safety.
Meaning, any well-typed term reduces to another well-typed term and at any point in the chain we know exactly what that term means, there are no surprises.
This is, of course, relative to the reflexive-transitive closure of our small-step reduction relation which models a lazy language.

\begin{theorem}[Preservation\protect\footnote{\protect\url{SystemFD/Metatheory/Preservation.lean:preservation}}]
  If $\TypeJ \Gamma M \tau$ and $\Gamma \vdash M \leadsto M^\prime$ then $\TypeJ \Gamma {M^\prime} \tau$
\end{theorem}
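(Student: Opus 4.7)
The plan is to proceed by induction on the reduction judgment $\Gamma \vdash M \leadsto M'$, using inversion on the typing derivation $\TypeJ \Gamma M \tau$ in each case. For the beta rules $\beta_\to$ and $\beta_\forall$, the standard approach applies: invert the application typing and then appeal to a substitution lemma stating that substitution of a well-typed term (respectively, a well-kinded type) preserves typing. For the coercion reduction rules, each one rewrites a reflexivity coercion to another reflexivity coercion; inversion on the coercion typing rules of Figure~\ref{fig:typing-coercions} yields the required kinding of the new type, so the rule for $\Refl$ applies directly. The cast rule follows because $\Refl\tau$ has type $\Eq \tau \tau \Type$, so casting by reflexivity leaves the type unchanged.

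For the conditional and guard rules, I would invert the typing of the scrutinee of the form $K\,\overline{A}\,\overline{B}$ to recover the constructor's type, then align it with the pattern typing $\TypeJ\Gamma P {\forall\overline{t:\kappa}.\overline\tau\to\sigma}$ and consequent typing $\TypeJ\Gamma M {\forall\overline{t:\kappa}.\overline\tau\to\upsilon}$ to type $M\,\overline{B}$ at $\upsilon$. For $\beta_{\mathsf{open}}$, I would use that well-formed environments require each $M_i$ in $\InstanceDef x {M_i}$ to have the declared type $\sigma$ from $\OpenSig x \sigma$; since $\Zero$ inhabits every type and $\Choice{-}{-}$ requires its two branches to share a type, the folded expression $\Choice\Zero{\Choice{M_1}{\Choice\ldots{M_k}}}$ also has type $\sigma$. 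The $\beta_{\mathsf{let}}$ case is analogous.

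The more delicate cases are the context rules. For $\zeta$, I need a decomposition lemma: if $\TypeJ\Gamma {A[\Zero]} \tau$, then the hole of $A$ admits some type $\rho$, and replacing it by $\Zero$ (which inhabits every type) yields $\Zero$ at $\tau$. For $\kappa$ and $\xi$, I would prove the companion statement: if $\TypeJ\Gamma {E[M]} \tau$, then $\TypeJ\Gamma M \rho$ for some $\rho$, and for any $M'$ with $\TypeJ\Gamma{M'}\rho$, $\TypeJ\Gamma{E[M']}\tau$. The inductive hypothesis then discharges $\xi$, and for $\kappa$ I additionally use that $A[M_1]$ and $A[M_2]$ each have type $\tau$ so their $\Choice$ does as well. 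The variety of coercion formers ($\Sym$, $\Trans$, $\CAp$, $\Fst$, $\Snd$, $\Univ$, $\CInst$, $\SimC$) in the grammar of $A$ multiplies the case work significantly, particularly because the mechanization merges syntactic classes.

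The main obstacle will be the context decomposition lemma in the presence of non-unique types. Because $\Zero$ inhabits every type, a subterm's type is not determined from its position alone, so the naive ``extract the unique type of the hole'' strategy fails. I expect to recover enough by leveraging the theorems of Uniqueness of Types Modulo $\Zero$ and Uniqueness of Types with Neutral Form: in each context former the hole either lies in a spine headed by a variable or constant (where neutral uniqueness applies) or lies under a binder or coercion constructor whose typing rule fully constrains the hole's type from the surrounding structure (kinding constraints in $\Cast$, $\Univ$, $\CInst$, etc.). The decomposition lemma then goes through by induction on $A$ and $E$, and once it is available the remaining congruence and absorption cases of preservation collapse to routine appeals to the inductive hypothesis.
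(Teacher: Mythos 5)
Your plan is essentially the paper's: the paper gives no prose argument, stating only that preservation ``is stated and proved in an entirely standard way'' in the Lean mechanization, and your standard induction on the reduction relation with substitution lemmas, a decomposition/recomposition lemma for evaluation contexts, and appeals to the two uniqueness results is exactly that standard route. In particular, you correctly anticipate the role of Uniqueness of Types Modulo $\Zero$ and Uniqueness with Neutral Form, which the paper explicitly flags as ``sufficient to prove the remaining metatheoretic theorems of interest.''
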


\subsection{Specialization of the Translation}
\label{sec:specialization}

The translation from our surface language to \SysFD characterizes a subset of \SysFD terms.
In particular, the subset satisfies the following definition.

\begin{definition}
  A \SysFD term is Haskell-Style with Statically Determined Instances (HSSDI) if, and only if:
  \begin{enumerate}
    \item Every open function is well-founded by a measure on the type instance evidence;
    \item Every invocation of an open function is supplied with a concrete instance value for the open type arguments or it is a variable bound to a guard pattern; and,
    \item For every permutation of applied instance arguments to an open function there exists an instance that has a preamble of guards that match exactly on that permutation.
  \end{enumerate}
\end{definition}

\begin{theorem}
  Let $\llbracket S \rrbracket = M$ be the partial translation of a surface language term $S$ to a \SysFD term $M$.
  If $M$ is defined, then $M$ is HSSDI.
\end{theorem}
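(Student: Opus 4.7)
The plan is to proceed by structural induction on the translation derivation $\llbracket S \rrbracket = M$, verifying each of the three HSSDI clauses as an invariant maintained by every case of the translator. It is convenient to split the induction along the two phases of translation: the declaration-level phase (which extends the environment with open data types, open functions, open constructors, and instances generated from class and instance declarations) and the expression-level phase (which produces \SysFD terms that consume that environment via instance search and coercion synthesis). Clauses (1) and (3) are essentially statements about the environment produced by the declaration phase; clause (2) is a local property of the expression phase.

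For clause (1), I would exhibit an explicit termination measure on instance evidence: the lexicographic combination of the depth of the guard-pattern decomposition with the Paterson measure, which the translator already requires of the surface program. The translation of every class method, superclass projection, and functional-dependency witness produces a body that begins with a chain of \textsf{guard}s on its evidence arguments, binding the existentially quantified type variables and coercions, and any recursive open-function invocation in the body is then applied to strictly sub-structural evidence (the variables introduced by the outer guards). A case-by-case inspection of the declaration-translator clauses confirms this structural descent. For clause (2), I would strengthen the induction hypothesis to the invariant that every subterm of an open type appearing as an argument to an open function in a translated expression is either a concrete application $K \Ap \overline{\sigma} \Ap \overline{M}$ of an open constructor, a variable bound by a surrounding \textsf{guard}, or a lambda-bound dictionary parameter. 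The three sources of open-function invocations in translated code, namely desugared method calls, superclass projections, and functional-dependency uses in coercion synthesis, each preserve this invariant by construction: instance search either succeeds (producing a concrete constructor application) or it pulls a dictionary variable directly from a surrounding binder.

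Clause (3), saturation, is the main obstacle. Unlike the first two clauses, it is a global closure condition on the emitted environment: after the declaration phase, every permutation of open-data-type constructors must be covered by an instance of every relevant open function. I would handle this by reifying the translator's instance-enumeration step as a closure operator that, upon adding a constructor of an open type $C$, re-enumerates every permutation of the currently known constructors of $C$ and emits an instance of every open function associated with $C$ (class methods, superclass projections, \textsf{fdFwd}, \textsf{fdBwd}) for each permutation. Two sub-obligations then arise: that each generated instance is well-typed, which reduces to the consistency and covering conditions of Jones and Diatchki assumed on the surface program, and that the enumeration is exhaustive, which is immediate by inspection of the translator's code. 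The subtle point is that saturation must be stable under the later addition of further instances; I would address this by an outer induction on the length of the declaration sequence, re-running the closure operator after each new class or instance declaration and observing that the previously emitted instances remain well-typed because the signature of each open function is fixed at its declaration site and never refined.
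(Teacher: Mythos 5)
Your proof is sound in outline, but it takes a genuinely different and substantially more ambitious route than the paper's. The paper does not perform a structural induction over the translation at all: for clauses (2) and (3) it observes that both conditions are \emph{decidable by syntactic inspection} of the output term---traverse $M$, check that every open-function argument is either a concrete instance value or a guard-bound variable, and check that every permutation of instance arguments matches the guard preamble of some emitted instance---and then asserts that the translator's outputs pass this check; for clause (1) it simply appeals to the fact that the translator enforces a naive size metric on instance evidence. In effect the paper proves the theorem by exhibiting a post-hoc checker rather than by establishing an invariant of the translation, which is why (as the subsequent mechanization note concedes) none of this is formalized. Your approach---maintaining the three clauses as invariants of the translation derivation, strengthening the hypothesis for clause (2), and reifying instance enumeration as a closure operator for clause (3)---is what a mechanized version of this theorem would actually require, and your observation that saturation must be shown \emph{stable} under later declarations (handled by your outer induction on the declaration sequence, using the fact that open-function signatures are fixed at declaration time) identifies a genuine obligation that the paper's inspection-based argument glosses over. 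The one place where you go beyond what either proof can currently deliver is the claimed lexicographic termination measure for clause (1): the paper only commits to a size metric on evidence that is ``sufficient for our examples,'' so your case-by-case structural-descent claim for every declaration-translator clause would need the translator's clauses in hand to verify, and is the most likely point at which your argument would need to retreat to the paper's weaker, example-dependent claim.
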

\begin{proof}
  Conditions two and three are testable by syntactic inspection.
  Traverse the term $M$ for every open function applied to arguments.
  Check that the open functions have a static open type instance applied for every open type argument or that
  it is a variable bound by a guard pattern. Hence, the second condition is satisfied.

  Now, for every instance of every open function scan it for a preamble of guards.
  A preamble consists of guards potentially separated by lambda binders.
  Extract only the patterns of these guards into a tuple. If the permutation of type instances
  matches any tuple of guard patterns then the third condition is satisfied.

  For the first condition the translation uses a naive metric consisting of the size of instance evidence.
  However, this is sufficient for our examples and does provide a well-founded restriction on open function recursion.
\end{proof}

There is a syntactic criterion for excluding reduction to zero that we will utilize.
Imagine a \SysFD term that does not contain an open function variable, a guard, or a zero.
Then it is impossible for it to produce a zero, because only open function reductions and guard reductions can possibly introduce a zero into the term.
This is almost true, but we must also eliminate let-bound variables to prevent substitution of terms that do not satisfy this property.

\begin{theorem}[Syntactic Guarantee of No Zeroes\protect\footnote{\protect\url{SystemFD/Metatheory/HSSDI.lean:syntactic_guarantee_of_reduction_to_zero_impossible}}]
  \label{thm:synnozero}
  Let $M$ be a \SysFD term such that there are no guards, zeroes, open function variables, or let-bound variables appearing as subexpressions in $M$, then $M$ does not reduce to $0$
\end{theorem}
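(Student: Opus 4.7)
The plan is to introduce an auxiliary predicate $P(M)$ asserting exactly the four syntactic conditions in the hypothesis: $M$ contains no subterm of the form $\Guard L P N$, no subterm equal to $\Zero$, no occurrence of a variable $x$ for which $\OpenSig x \sigma \in \Gamma$, and no occurrence of a variable $x$ for which $\LetDef x N \in \Gamma$. The proof then reduces to a single-step preservation lemma: if $P(M)$ and $\Gamma \vdash M \leadsto M'$ then $P(M')$. A routine induction on the length of a reduction sequence lifts this to the reflexive-transitive closure, and since $\Zero$ itself trivially fails $P$, no such $M$ can reduce to $\Zero$.

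The main content is the preservation lemma, which I would establish by case analysis on $\Gamma \vdash M \leadsto M'$, supported by a standard substitution lemma stating that $P(M)$ and $P(N)$ imply $P(\Subst x N M)$ (and similarly for type substitution, which is immediate since types contain none of the forbidden forms). For the congruence rule $(\xi)$ and the mapping rule $(\kappa)$, the conclusion follows directly from the inductive hypothesis together with the observation that the hole contexts themselves introduce none of the forbidden subterms. The $\beta$ rules for ordinary and type application, the various $\delta$ rules for coercion formers, the cast-elimination rule $(\delta_\triangleright)$, the conditional rules $(\delta_{\mathsf{if}\text-1})$ and $(\delta_{\mathsf{if}\text-2})$, and the choice-identity rules $(\beta_{\Zero\text-1})$ and $(\beta_{\Zero\text-2})$ all merely rearrange or discard existing subterms, so they preserve $P$ (and the $\beta_\Zero$ rules cannot even fire under the hypothesis, since they require a $\Zero$ subterm).

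The only reductions that could in principle produce a fresh $\Zero$ or another forbidden construct are exactly the four ruled out by the hypothesis: $(\zeta)$ produces $\Zero$ but only fires when $\Zero$ is already present; $(\delta_{\mathsf{guard}\text-2})$ produces $\Zero$ but only when a guard is the redex; $(\beta_{\mathsf{open}})$ unfolds to a choice tree containing $\Zero$ and the environment's instance definitions, but only fires on an open-function variable; and $(\beta_{\mathsf{let}})$ substitutes an arbitrary environment body but only fires on a let-bound variable. In every case, $P(M)$ directly forbids the redex, so these cases are vacuous; no reasoning about the structure of their right-hand sides is needed.

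The main obstacle I expect is bookkeeping rather than conceptual difficulty, given the roughly twenty reduction rules that must each be checked. A secondary care point is ensuring the substitution lemma is stated uniformly for both term and type substitution so that $\beta_\to$ and $\beta_\forall$ dispatch cleanly, which in the mechanization means aligning the invariant with the de Bruijn scheme used by the adapted Autosubst infrastructure. Since the predicate $P$ is a simple structural recursion on terms and is preserved by substitution componentwise, I do not anticipate any subtle interaction with capture or environment extension.
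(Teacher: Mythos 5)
Your proposal is correct and follows essentially the same route as the paper, which argues informally that only guard reductions, open-function unfoldings, and let-substitutions can introduce a $\Zero$ and therefore excludes exactly those constructs (deferring the detailed case analysis to the Lean mechanization). Your invariant-preservation lemma over the reduction rules is precisely that argument made explicit, and the case analysis you sketch covers all the rules correctly.
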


Finally, we approach the problem by considering specializing a term such that it satisfies the above syntactic condition.
This specialization highlights another important aspect of our semantics of Haskell in terms of \SysFD.
The intent is exactly for the additional \SysFD machinery to be an intermediary.
Hence, the observation that we can specialize away guards and open functions leaves us with an interpretation of Haskell's typeclasses that does not incorporate any unusual bottom elements or operations.

\begin{theorem}[Specialization]
  If $M$ is HSSDI then there exists a specialized term $M^\prime$ that does not reduce to 0
\end{theorem}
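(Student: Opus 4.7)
The plan is to reduce the claim to Theorem~\ref{thm:synnozero}: I would build $M'$ from $M$ by statically specializing away every open function invocation, every guard, and every let-binding. Once the resulting $M'$ has no open function variables, no guards, no $\Zero$, and no let-bound variables as subexpressions, the cited theorem immediately yields that $M'$ does not reduce to $\Zero$.

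The specialization is defined by well-founded recursion over the instance-evidence measure supplied by condition~(1) of HSSDI, lifted lexicographically over the subterm order so that purely structural traversal (through $\lambda$-abstractions, applications, coercion applications, and the like) is handled by the subterm component. At each open function invocation $x\,[\overline{\sigma}]\,d_1 \cdots d_n$ I would appeal to condition~(2): each $d_i$ is either a concrete instance value, or is a variable bound by an enclosing guard whose pattern the outer recursion has already pinned down to a concrete constructor. Condition~(3) then supplies an instance declaration of $x$ whose preamble of guards matches that exact permutation; I inline the instance body, fire each preamble guard against the concrete instance arguments so that no guard survives, and recurse on the result. Let-bindings are eliminated by substituting the right-hand side and recursing on the substituted body.

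The main obstacle is justifying termination. Inlining an instance body syntactically re-introduces guards, open function calls, and potentially let-bindings, so the raw subterm order does not suffice; the induction must track that every new open function call inside an inlined body carries strictly smaller instance evidence than the call that produced it. This is exactly what condition~(1) is engineered to guarantee, and it is the only place in the argument where well-foundedness is used nontrivially. A secondary obligation is to verify that firing a preamble guard never introduces a fresh $\Zero$: because condition~(3) matches the supplied arguments exactly, only the hit rule fires and never the miss rule, so the inlined body is $\Zero$-free by construction. Having established termination and $\Zero$-freeness, the resulting $M'$ satisfies the syntactic criterion of Theorem~\ref{thm:synnozero}, and the conclusion follows, with $M'$ corresponding operationally to one deterministic reduction path through the non-deterministic choice structure latent in $M$.
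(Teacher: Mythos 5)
Your proposal follows essentially the same route as the paper's proof: substitute away let-bindings, unfold open function invocations using conditions (2) and (3) of HSSDI to find a matching instance, justify termination of the unfolding by the instance-evidence measure from condition (1), and conclude via Theorem~\ref{thm:synnozero}. The only presentational difference is that you inline just the matching instance so no $\Zero$ is ever introduced, whereas the paper unfolds the full choice tree over all instances and then eliminates the $\Zero$s from non-matching branches using the absorption rules; both land in the same place.
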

\begin{proof}
  As a preprocessing step, we substitute all let-bound variables into the term; because let definitions are not recursive this process is well-founded.
  We proceed by an analogy to optimization and in particular specialization.
  Consider a HSSDI term, if we unfold an open function and apply its arguments then there is a tree of choices for each instance.
  However, only the instances that match the patterns will survive, of which there must be at least one by the third condition of HSSDI.
  As the bodies of the guards are also HSSDI, the process may be repeated by induction on the size of the instance evidence, any open functions we unfold inside the body must have a smaller instance evidence.
  Using the absorption rules of reduction any remaining zeroes are eliminated.
  Now the term does not consist of any open function variables, guards, zeros, or let-bound variables.
  Let $M^\prime$ be the term produced by these transformations.
  Then, by Theorem~\ref{thm:synnozero} $M^\prime$ does not reduce to 0.
\end{proof}

Specialization of a term, however, does not rule out the case where more than one valid instance is chosen.
This is effectively a failure of coherence which in our semantics manifests as exploring every possible instance that may work.
Alternatively, the HSSDI condition could demand that there is exactly one instance that matches, enforcing a standard coherence condition.

\subsubsection*{Mechanization note.}
The reader at this point may wonder why the above theorems are not mechanized if they are seemingly intuitive.
The issue is three-fold.
First, implementing the translation in a way that is amenable to proving properties is a hard problem, and convincing Lean4 that the various components are terminating is challenging.
Second, the first condition of HSSDI is intuitively clear because we know concretely how evidence is going to be placed, but independently characterizing this condition on arbitrary \SysFD terms is difficult.
Third, related to the previous problem, the specialization proof is inherently a termination proof with respect to a new reduction relation captured the transformations of specialization.
Termination arguments are usually hard to mechanize and in our experience this example was no exception.

\subsection{Emulating Autosubst in Lean4}
The Autosubst library in Rocq \cite{SchaferTS15} and Autosubst2 external tool \cite{StarkSK19} are, in the authors opinion, well designed methods of dispensing with the bureaucracy of substitution.
However, both the library and the external tool target only Rocq.
There has been work on reimplementing the library of Autosubst in both Lean3 and Lean4, but none that has been published or gained any traction to the authors knowledge.
In this work, explicit tactics (i.e., \texttt{asimpl}) are replaced with Lean4's simplification machinery.
This gives a similar feel to the Autosubst developments in Rocq where wrangling substitutions becomes a simple matter of invoking one tactic: \texttt{simp} (\texttt{asimpl} respectively. in Rocq).
There are two main differences between this work and other works either directly using Autosubst or emulating it:
\begin{enumerate}
  \item this work uses type classes without additional macro syntax in describing the inductive type to convey the necessary structure;
  \item and substitutions are generalized to a map from variables (i.e., natural numbers) to a \emph{substitution action} consisting of either a renaming or a replacement.
\end{enumerate}
The first difference is neither better nor worse, it is merely a different exploration of the library design space.
To accomplish it, a user of the library must provide a \emph{substitution map} which behaves similarly to a standard functorial map with the exception of applying lifting at the associated places where a binder would expand the scope.
Afterwards, the functorial laws must be proven relative to substitutions: that the identity substitution behaves as an identity; and that substitutions compose.
The later proof is typically difficult, but an associated tactic is able to solve the theorem for any non-indexed extrinsically presented syntax.

Generalizing the type of substitutions enables variables to have associated data without sacrificing the benefits of the Autosubst design.
In this way, a variable may be marked as either a term variable or a type variable while still having an extrinsic presentation of syntax.
For example, if it became necessary to write an algorithm to decide the classification of a term then such a function would be possible \emph{without} carrying around a context to associate variables to syntactic categories.
In essence, it would mean that the intrinsic nature of the classification theorem would be present in the extrinsic syntax.
This is a complementary approach to multi-sorted substitutions that Autosubst prefers when syntax is naturally split into separate inductive types based on syntactic category.

Reuse of the library provided and developed in the mechanization of \SysFD requires copying one library file\footnote{\protect\url{SystemFD/Substitution.lean}} and providing the necessary type class instances\footnote{\protect\url{SystemFD/Term/Substitution.lean:substTypeLaws_Term}}.

\section{Related Work}
\label{sec:related}

We summarize three areas of related work: we compare \SysFD to \SysFC, Haskell's core language; we relate our semantics to other work on type families and functional dependencies, and finally we relate our work to other semantics of overloading.

\subsection{\SysFCbf}

\SysFC is a core language for Haskell. Initially, \SysFC combined System F with first-class type equality proofs \citep{SulzmannCJD07}, used to implement (among other things) associated types \citep{ChakravartyKPJM05} and GADTs \citep{SchrijversPJSV09}. Later extensions to \SysFC include kind promotion and polymorphism \citep{YorgeyWCJVM12}, enabling flexible type-level programming with singleton types \citep{EisenbergW12}, and kind equalities \citep{WeirichHE13}, allowing all the features of Haskell's type system to be used at the kind level as well.

The treatment of coercions in \SysFD is immediately inspired by that of \SysFC, but differs in several ways. \SysFC treats coercions themselves as types, not as terms, and so coercion types are kinds, not types. This has several consequences. Type erasure---that is, the property that type information can be erased without changing the semantics of a \SysFC term---is automatically extended to coercion erasure as well. Moreover, because \SysFC has no fixed point operation at the level of types, there are no non-terminating coercions. \SysFD treats coercions as terms, not as types, as our interpretation of functional dependencies and type functions will rely on open functions that return coercions. We consider ruling out divergence at coercion type as interesting but essentially orthogonal future work.

The semantics of coercions in \SysFC is captured by a collection of ``push'' rules, which move coercions out of the way of reductions in terms. In contrast, the semantics of \SysFD simply relies on reducing coercions to reflexivity, and then eliminating them. We believe this approach could not apply in \SysFC because the interpretation of type families in \SysFC relies on introducing non-reflexivity coercion axioms, which do not have elimination forms.

\SysFD and \SysFC have very different ideas of how to express source language features.
\SysFC reflects many source language features directly, or nearly directly.
For example, \SysFC has axioms that exactly parallel the form of closed type families.
In \SysFD, in contrast, our goal is to express many surface language features by translation to a single set of core features.
These approaches impose different metatheoretic demands.
For \SysFC, adding new features can mean extended the core type system, and so reproving type safety.
For \SysFD, encoding new features requires proving saturation.
Our hope is that the additional expressiveness of \SysFD will offset the metatheoretic requirements of its use.

\subsection{Type Families and Functional Dependencies}

Type families were originally tied to type classes as associated types \citep{ChakravartyKPJM05}, but were soon freed from their tie to the class system \cite{SchrijversPJCS08}. Later extension include closed type families \citep{EisenbergVPJW14}, which allow overlapping type family instances so long as they are explicitly ordered, and injective type families \citep{StolarekJE15}, which add more of the flavor of functional dependencies to type families.

Each of these features are given semantics by corresponding extension of the structure of \SysFC coercion axioms. Open type families introduce axioms asserting that the left- and right-hand sides of their defining equations are equal. Closed type families introduce chains of axioms, each of which applies only to types apart from the earlier axioms in the chain. Injective type families introduce injectivity axioms. In each case, then, these axioms must be shown not to introduce unsoundness in \SysFC; the example of injectivity introducing unsoundness~\cite{StolarekJE15} shows the kind of complexity that needs to be addressed in these proofs.

\citet{MorrisE17} introduce \SysCFC, which requires that uses of type families in types must be justified by the axioms present in the program. This (seemingly minor) restriction allows them to lift side conditions on the definition of closed type families, and gives a significantly simpler proof of type safety. They identify the side conditions on injective type families as also likely to be addressed by their technique, but do not formalize injective type families in \SysCFC.

Functional dependencies were originally introduced by \citet{Jones00} as a way for type classes to give rise to type refinement. Much subsequent work on functional dependencies \citep{DuckPSS04,SulzmannDPJS07,JonesD08} have focused on the role of functional dependencies in type inference, not on their semantics or otherwise on their representation in a core language. \citet{KarachaliasS17} give a translation of functional dependencies into type functions; to our knowledge, this is the only prior work that associates functional dependencies with coercions. Of course, it neither simplifies nor complicates the underlying semantics of type families.

The \SysFD interpretation of type families and functional dependencies is fundamentally different from the prior work, as it does not depend on extension of \SysFD's coercion type, and so cannot compromise type safety. Instead, the burden is placed on the translation to be able to generate well-typed open function instances that return suitable coercions.

\subsection{Other Treatments of Haskell-style Overloading}

\citet{Jones95} introduced \emph{improvement} to reflect the relationship between the predicate satisfiability and polymorphism. On his account, two type schemes ought to be considered equivalent if their satisfiable ground instances---that is to say, the ground instances such that any predicates are provable---are the same. Jones's explanation of functional dependencies \citep{Jones00} is in terms of improvement. However, he considers the consequences of improvement for type inference, not for typing itself. Recall our example class !F! with instance !F Int Bool!~\cref{fig:fundeps-polymorphic}; while Jones's theory holds that the types !F Int a => a -> a! and !Bool -> Bool! are equivalent, and thus a type inference algorithm would be justified in producing either type, we do not believe he ever described a typing rule (or its semantics) that would allow a term of one type to be used as a term of the other.

\citet{Morris14}, building on \citet{Ohori89} and \citet{Harrison05}, described a domain-theoretic semantics of Haskell type classes in which class methods are interpreted as sets of their implementations at each ground type. This approach naturally realizes Jones's account of satisfiability. Morris proposes a typing rule that would allow !not! to be used at type !F Int a => a -> a!, and proves its soundness. However, Morris's account is purely semantic, and so does not explain how this typing rule could be realized in a core language like \SysFC.

\section{Future Work}
\label{sec:conclusion}

We have described \SysFD, a new core language for Haskell, and shown that many of the advanced features of Haskell's type class and type family systems can be naturally expressed in \SysFD. We conclude by discussing three areas of future research.

\subsubsection*{Formalizing source type systems}

Existing work on type families in Haskell has formally described only the core language, not the typing of the surface language or the translation from the surface language to the core language. We suspect several reasons for this. On the one hand, Haskell has a rich surface language, with many features irrelevant to the type class and type family systems. On the other, the surface language features of type families are represented fairly directly in their \SysFC encodings.

That said, we see several benefits to formally (and indeed mechanically) characterizing type classes and type families in Haskell's source language and  their translation to \SysFD. This will allow us formally verify that \SysFD accounts for the features of Haskell, and moreover that the source language type system is sufficient to guarantee that the translation to \SysFD is well-defined. More ambitiously, this will provide a basis for giving semantics directly to source programs (perhaps following Morris's domain theoretic approach \citep{Morris14}) and showing that this semantics is preserved by the translation to \SysFD.

\subsubsection*{Compiling System $F_D$}

\SysFC gives very pleasant compilation story for type classes and type families. Type classes are source language ephemera, transformed into dictionaries en route to \SysFC. Type families are realized by coercions, which are included in \SysFC's type erasure results. Thus, an implementer can be sure that type families, as interpreted by \SysFC, need introduce no runtime overhead of their own.

We believe that coercions should be erasable in \SysFD as well. The most significant challenge is that, without any restriction on the form of instances, open functions introduce arbitrary recursion. On the other hand, restricting the form of open function instances will correspondingly restrict the form of type class and type family instances that can be translated to \SysFD. While Haskell formally has quite strict rules for type class instances, which would undoubtedly allow sufficient restriction on open functions to guarantee termination, in practice Haskell developers rely on relaxing such restrictions.

\subsubsection*{Richer type system features}

Functional dependencies \citep{Hallgren01} and type families \citep{KiselyovJS10} can both be used to express computation at the type level. This has naturally led to encodings of patterns from dependently typed programming in Haskell \citep{LindleyM13}, as well as a gradual evolution of Haskell itself toward dependent types \citep{WeirichVAE17,WeirichCVE19}. A natural question is whether type classes and type families are still relevant in a dependently-typed Haskell. We suggest that they will be: type classes and type families are fundamentally about openness, which is not a typical concern in dependent type theory. Moreover, we argue that our treatment of coercions is closer to the typical treatment of the equality type in dependent type theory (i.e. Martin-L\"of's $J$ combinator) than the ``push'' rules of \SysFC. That being said, we think there would be significant challenges to adapting \SysFD to dependent types, most immediately the presence of open function calls in types.

\ifreview
\else
\section*{Data Availability Statement}
\label{sec:data}
\begin{acks}

\end{acks}
\fi

\InlineOff
\bibliographystyle{ACM-Reference-Format}
\bibliography{first}

\ifextended

\clearpage

\appendix
\fi

\end{document}